\providecommand{\U}[1]{\protect\rule{.1in}{.1in}}
\newtheorem{theorem}{Theorem}
\newtheorem{case}{Case}
\newtheorem{condition}{Condition}
\newtheorem{definition}{Definition}
\newtheorem{proposition}[theorem]{Proposition}
\newenvironment{proof}[1][Proof]{\noindent\textbf{#1.} }{\ \rule{0.5em}{0.5em}}
\begin{document}

\title{\textbf{How Macroeconomists Lost Control of Stabilization Policy: Towards Dark
Ages}\thanks{We thank two anonymous referees and editors Hans Michael
Trautwein Andr\'{e} Lapidus, Jean-S\'{e}bastien Lenfant and Goulven Rubin. We
thank Alain\ Raybaut our discussant as well as participants of the session at
ESHET\ conference in Lille 2019. We thank Marwan Simaan and Edward Nelson for
useful comments. }}
\author{Jean-Bernard Chatelain\thanks{Paris School of Economics, Universit\'{e} Paris
I Pantheon Sorbonne, 48 Boulevard Jourdan 75014 Paris. Email:
jean-bernard.chatelain@univ-paris1.fr.} \ and Kirsten Ralf\thanks{ESCE
International Business School, INSEEC\ U. Research Center, 10 rue Sextius
Michel, 75015 Paris, Email: Kirsten.Ralf@esce.fr}}
\date{September 25, 2020}
\maketitle

\begin{abstract}
This paper is a study of the history of the transplant of mathematical tools
using negative feedback for macroeconomic stabilization policy from 1948 to
1975 and the subsequent break of the use of control for stabilization policy
which occurred from 1975 to 1993.\ New-classical macroeconomists selected a
subset of the tools of control that favored their support of rules against
discretionary stabilization policy. The Lucas critique and Kydland and
Prescott's time-inconsistency were over-statements that led to the
\textquotedblleft dark ages" of the prevalence of the
stabilization-policy-ineffectiveness idea. These over-statements were later
revised following the success of the Taylor (1993) rule.

\textbf{JEL\ classification numbers}: C61, C62, E43, E44, E47, E52, E58.

\textbf{Keywords:} Control, Stabilization Policy Ineffectiveness, Negative
feedback, Dynamic Games.

\textbf{Preprint:
\href{https://doi.org/10.1080/09672567.2020.1817119}{https://doi.org/10.1080/09672567.2020.1817119}%
}

\textit{The European Journal of the History of Economic Thought} (2020), 27(6).

\end{abstract}

\section{Introduction}

This paper presents a longitudinal study of the transplant of key ideas and
mathematical tools from negative-feedback control in engineering and applied
mathematics to macroeconomic stabilization policy. This movement evolved
parallel to the \textquotedblleft rules \emph{versus}
discretion\textquotedblright\ or \textquotedblleft stabilization policy
ineffectiveness\textquotedblright\ controversy from 1948 to 1993. In
particular, we observe a fast transplant of classic control and optimal
control to stabilization policy in the 1950s and 1960s, followed by a long
delay to transplant robust control and stochastic optimal control to optimal
state estimation and optimal policy. The paper re-evaluates the Lucas critique
and time-inconsistency argument which contributed to the bifurcation with
diverging paths between control \emph{versus} the modeling of stabilization
policy by mainstream macroeconomics in the 1970s and 1980s.

Adam Smith (1776) believed that demand and supply are always self stabilizing
due to a negative feedback mechanism in the private sector. In the 1930s
emerged Keynesian macroeconomic stabilization policy where the policy maker
uses negative feedback mechanisms with monetary or fiscal policy instruments.
Friedman (1948) started the rules \emph{versus} discretion controversy, when
he proposed fiscal rules that do not vary in response to cyclical fluctuations
in business activity, so that it is the private sector's negative-feedback
mechanism that stabilizes the economy, and not the policy maker. He defined
\textquotedblleft discretion\textquotedblright\ as Keynesian state contingent
policy where the policy instruments change with respect to the deviation of
policy targets from their set points. Discretionary policy was used in the
US\ in the 1950s and 1960s, and the proponents of rules were losing in the
controversy during these two decades.

In parallel to the development of Keynesian stabilization policy, the field of
applied mathematics and engineering developed the tools of classic control
during 1930-1955 (Bennett (1996)). This field gains maturity and autonomy
while creating a world association with a first IFAC\ conference in 1960.
Between 1955\ and 1990, it has an impressive and fast rate of new discoveries:
optimal control, optimal state estimation with Kalman filter, stochastic
optimal control, robust optimal control, Nash and Stackelberg dynamic games.
These discoveries were readily applied for many devices with numerical
algorithms using the development of computers at the time. Astr\"{o}m and
Kumar (2014) survey the research field of control which is based on
negative-feedback rules stabilizing a dynamic system:

\begin{quotation}
Feedback is an ancient idea, but feedback control is a young field... Its
development as a field involved contributions from engineers, mathematicians,
economists and physicists. It represented a paradigm shift because it cut
across the traditional engineering disciplines of aeronautical, chemical,
civil, electrical and mechanical engineering, as well as economics and
operations research. The scope of control makes it the quintessential
multidisciplinary field. (Astr\"{o}m and Kumar (2014), p.~3)
\end{quotation}

There was a strong demand for control tools for firm-level planning and for
macroeconomic stabilization policy in the 1960s (Kendrick (1976), Kendrick
(2005), Neck (2009) and Turnovsky (2011)). Barnett describes the related
controversies surrounding the model of the Federal Reserve Board in the 1970s
(Barnett and Serletis (2017)):

\begin{quotation}
The policy simulations were collected together to display the policy target
paths that would result from various choices of instrument paths. The model
was very large, with hundreds of equations. Some economists advocated
replacing the \textquotedblleft menu\textquotedblright\ book of simulations
with a single recommended policy, produced by applying optimal control theory
to the model. The model was called the FMP model, for Federal
Reserve-MIT-Penn, since the origins of the model were with work done by Franco
Modigliani at MIT and Albert Ando at the U. of Pennsylvania, among others.
That model's simulations subsequently became an object of criticism by
advocates of the Lucas Critique. The alternative optimal control approach
became an object of criticism by advocates of the Kydland and Prescott (1977)
finding of time inconsistency of optimal control policy. (Barnett and Serletis
(2017), p.~7-8)
\end{quotation}

The Lucas (1976) critique and Kydland and Prescott's (1977) time-inconsistency
argument convinced a sufficient number of macroeconomists that, \emph{although
negative-feedback mechanism and optimal control should be used for modeling
the private sector, negative-feedback mechanism and optimal control cannot be
used for stabilization policy by macroeconomists and practitioners of monetary
and fiscal policy}.

In the 1970s, Lucas, Kydland and Prescott, labeled as new classical
macroeconomists, took sides for rules in the rules \emph{versus} discretion
controversy. On the one hand, incorporating control tools into macroeconomics
was obviously a scientific progress, and new classical macroeconomists
invested heavily, like Keynesian macroeconomists, in learning these tools in
the 1960s. On the other hand, control tools were supporting negative-feedback
mechanism driven by policy makers, hence discretion. Therefore, the tools of
control were pivotal in the rules \emph{versus} discretion controversy.

Efficient multi-disciplinary research tools can be imported from one field of
research to another one. However, the scientists in the field of arrival are
free to bias their choice of the tools to be imported from the field of
origin, if they are taking side in a scientific controversy. This selection
bias entails the risk of the inconsistency of the imported subset of tools
with respect to the field of origin.

The new-classical economists put forward their normative rational expectations
theory with theoretical demonstrations using a Kalman filter. They claimed it
is impossible to estimate parameters of the transmission mechanism when there
is reverse causality of the feedback rule. The new-classical economists
suggested importing time-inconsistency into dynamic games. They simulated
models using the linear quadratic regulator for the private sector. All these
approaches are using tools from the field of control.

But, in addition, following the complete guidelines of the field of control
where the accuracy of the measurement of the transmission mechanism is a key
element, they could have attempted the falsification of their theory
estimating parameters with a Kalman filter. They could have attempted to
devote a lot of resources to identification strategies when facing reverse
causality in systems of equations. They could have determined optimal feedback
policy based on these estimation of state variables using stochastic optimal
control. They could have searched for a policy that could be robust to some
ranges of uncertainty on parameters of the transmission mechanism.

But using these tools would have been inconsistent with their research agenda
where they were taking sides in a scientific controversy. They biased the use
of some tools of control in order to support their prior view on the side of
\textquotedblleft rules\textquotedblright\ in the controversy. The temporary
success (for two decades) of their selection bias restricted the demand and
delayed the use of the tools of control for stabilization policy.

This helps to reconsider how, in order to convince a sufficiently large subset
of the community of macroeconomists, \emph{the authors rhetorically
generalized some valid statements of their papers stretching them to extreme
conclusions, which, in turn, were false statements}.

The Lucas critique (1976) over-stated that it is impossible to identify
parameters in a dynamic system of equations with reverse causality. Despite
the Lucas critique, Kydland and Prescott (1982) over-stated that the
US\ economy during 1950-1979\ behaved \emph{as if} stabilization policy had
never been implemented (policy instruments were pegged) or \emph{as if} the
policy instruments did not have an effect on policy targets. Kydland and
Prescott (1977) over-stated that policy maker's credibility can never be
achieved using negative-feedback rules according to optimal control.

Simulations using private sector micro-economic foundations and
auto-regressive exogenous shocks were rhetorically presented as the magical
\textquotedblleft scientific\textquotedblright\ solution to answer the Lucas
critique, to avoid time inconsistency and to describe business cycles data.

These rhetorics marginalized or delayed for at least a decade attempts to
model stabilization policy transplanting the new tools of robust optimal
control facing parameter uncertainty and stochastic optimal control with
feedback rules reacting to estimates of state variables using a Kalman filter
with macroeconomic time-series. This outcome is labeled \textquotedblleft dark
ages\textquotedblright\ by Taylor (2007):

\begin{quotation}
But after this flurry of work in the late 1970s and early 1980s, a sort of
\textquotedblleft dark age\textquotedblright\ for this type of modeling began
to set in. Ben McCallum (1999) discussed this phenomenon in his review
lecture, and from the perspective of the history of economic thought, it is an
interesting phenomenon. As he put it, there was \textquotedblleft a long
period during which there was a great falling off in the volume of
sophisticated yet practical monetary policy analysis". (Taylor (2007))
\end{quotation}

In order to explain the selection of tools imported from control on behalf of
the new-classical macroeconomist side in the \textquotedblleft rules
\emph{versus} discretion\textquotedblright\ controversy, our method is to use
as a reference model the simplest model of control. We translate the
mathematical arguments of the most cited papers in this controversy into the
framework of this single model. This helps to understand how different
definitions of discretion and how different hypothesis on the persistence of
the policy targets in the transmission mechanism matter, even though they were
not highlighted so far in the history of the rules \emph{versus} discretion controversy.

The structure of this study is as follows:

Section 2\ frames the original classical economists' view of self-stabilizing
markets in the framework of Ezekiel's (1938) Cobweb model. Friedman's (1948)
and Kydland and Prescott's (1977) rules \emph{versus} discretion controversy
is presented in the framework of the simplest first order single-input
single-output model of control.

Section 3 documents the fast transplant of classic control from Phillips
(1954b) and resurrection with the Taylor (1993) rule. It mentions the fast
transplant of optimal control to stabilization policy in the 1960s and of a
Kalman filter for rational expectations theory. By contrast, It mentions the
very limited use of Stochastic Optimal Control using simultaneously
Kalman-filter estimations for determining optimal stabilization policy in the
linear quadratic Gaussian model. Finally, it emphasizes the long delay before
transplanting robust control, dealing with the uncertainty on parameters.

Section 4\ re-evaluates the claim of the Lucas (1976) critique that it is
impossible to identify the parameters of the transmission mechanism when there
is reverse causality due to a negative-feedback rule. The Lucas critique is
not resolved by microeconomic foundations, by Sims' (1980) vector
autoregressive models, by Kydland and Prescott's (1982) real business cycle,
nor by Lucas' (1987) welfare cost of business cycles. Kydland and Prescott
(1977) section 5 has a specific definition of discretion assuming a Lucas
critique bias, which is not related to time-inconsistency.

Section 5 re-evaluates the time-inconsistency argument and the impossibility
of policy maker's credibility leading to the impossibility to use negative
feedback grounded by control. Firstly, it credits time-inconsistency to Simaan
and Cruz' (1973b) first contribution with Kydland (1975, 1977), Calvo (1978)
and Kydland and Prescott (1980) as followers. Secondly, it highlights that the
inflationary bias in Barro and Gordon's (1983) static model is distinct from
Calvo's (1978) dynamic time-inconsistency.

Section 6 explains how Taylor (1993) rhetorically translated Friedman's (1948)
rules \emph{versus} discretion controversy in his \textquotedblleft
Semantics\textquotedblright\ section to the advantage of the negative-feedback
mechanism of his Taylor rule.

Section 7 concludes that the \textquotedblleft rules \emph{versus}
discretion\textquotedblright\ controversy on macroeconomic stabilization
policy biased and delayed the efficient transfers of knowledge from another
field of research (the field of control), and by doing so, it delayed
scientific progress.

\section{Rules \emph{versus} Discretion and Control}

\subsection{Self-Stabilizing Markets: Smith (1776) and Ezekiel's (1938)
Counter-Example}

The main underlying disagreement of the debate starting with Friedman's (1948)
`rules \emph{versus} discretion' and continuing with the new-classical
macroeconomists' attack against stabilization policy during the 1970s and
1980s, is the question which forces stabilize the economy. Relating this to
optimal control, i.e.~finding a control law such that an objective function is
optimized in a dynamical system, the question is whether it is the private
sector's behavior alone that leads to an economic equilibrium or whether there
is a need for economic policy in the form of government intervention. As a
prominent example, optimal control of the private sector, using negative
feedback, stabilizes the markets in Kydland and Prescott's (1982) business
cycle model.

As has been noted by Mayr (1971) it is even possible to interpret Adam Smith's
(1776) self-regulating local stability of supply and demand market equilibrium
as a negative-feedback mechanism.

\begin{quotation}
When the quantity brought to market exceeds the effectual demand, it cannot be
all sold to those who are willing to pay the whole value.... Some part must be
sold to those who are willing to pay less, and the low price which they give
for it must reduce the price of the whole. (Adam Smith's (1776, chapter\ 7))
\end{quotation}

Mayr (1971), however, did not relate his control translation of Smith to the
Cobweb model (Ezekiel (1938)). In the Cobweb model, the deviation of the
market price $p_{t}$ from its natural (equilibrium) price $p^{\ast}$ is a
decreasing function of excess supply, the difference between supply $x_{t}%
^{s}$ and demand $x_{t}^{d}$:%

\[
p_{t}-p^{\ast}=F\left(  x_{t}^{s}-x_{t}^{d}\right)  \text{ with }F<0.
\]

Conversely, the difference between the market price and the natural price is
the signal that tells the producer whether to increase or decrease his
production. Excess supply increases with the price, including a time lag to
adjust supply:%

\[
x_{t+1}^{s}-x_{t+1}^{d}=B\left(  p_{t}-p^{\ast}\right)  \text{ with }B>0.
\]

Excess supply does not depend on its own lagged value $x_{t}^{s}-x_{t}^{d}$.
There is no persistence of excess supply in the case where the price is set to
its equilibrium value: $p_{t}=p^{\ast}$.

\begin{quotation}
If at any time it [the supply] exceeds the effectual demand, some of the
component parts of its price must be paid below their natural rate. If it is
rent, the interest of the landlords will immediately prompt them to withdraw a
part of their land; and if it is wages or profit, the interest of the
labourers in the one case, and of their employers in the other, will prompt
them to withdraw a part of their labour or stock from this employment. The
quantity brought to market will soon be no more than sufficient to supply the
effectual demand. (Adam Smith's (1776, chapter\ 7))
\end{quotation}

Smith concludes that this feedback mechanism implies that market prices tends
towards the natural equilibrium price:

\begin{quotation}
The natural price, therefore, is, as it were, the central price, to which the
prices of all commodities are \textbf{continually gravitating.} Different
accidents may sometimes keep them suspended a good deal above it, and
sometimes force them down even somewhat below it. But whatever may be the
obstacles which hinder them from settling in this center of repose and
continuance, they are \textbf{constantly tending towards it}. (Adam Smith
(1776, chapter\ 7))
\end{quotation}

As opposed to Smith's (1776) intuition, however, the convergence result in the
Cobweb dynamics is only valid under a specific condition for price
elasticities of supply and demand:%

\[
x_{t+1}^{s}-x_{t+1}^{d}=BF\left(  x_{t}^{s}-x_{t}^{d}\right)  \text{ requires
}-1<BF<0.
\]

Feedback can bring local stability (negative feedback) or local instability
(positive feedback) within the private sector. Although Ezekiel (1938) does
not cite Smith (1776), he uses the same word (\textquotedblleft
gravitate\textquotedblright) for describing classical economic theory:

\begin{quotation}
Classical economic theory rests upon the assumption that price and production,
if disturbed from their equilibrium \textbf{tend to gravitate back toward that
normal}. The cobweb theory demonstrates that, even under static conditions,
this result will not necessarily follow. On the contrary, prices and
production of some commodities might tend to fluctuate indefinitely [case
$BF=-1$], or even to diverge further and further from equilibrium. [case
$BF<-1$] (Ezekiel (1938), p.~278-279)
\end{quotation}

When $BF=0$ ($F=0$ or $B=0$) the adjustment towards the equilibrium following
an excess supply or excess demand shock takes only one period. The
demand-first equation can be interpreted as a proportional feedback rule,
where the price plays the role of the feedback-policy instrument of the
private sector. Since the Cobweb model assumes zero open-loop persistence of
supply, if the price elasticity of demand and therefore the parameter $F$ was
to be chosen optimally, it would be set to an infinite elasticity ($F=0$).
Then the optimal policy of the private sector is to peg the price at its
optimal value $p_{t}=p^{\ast}$.

\subsection{Positive \emph{versus} Negative Feedback in a First-Order
Two-Inputs Single-Output Linear Model}

Before going into details on the rules-\emph{versus}-discretion controversy in
the next section, we will clarify the concepts of positive and negative
feedback and their relation to controllability and local stability. For this
we introduce explicitly a policy maker and consider the monetary policy
transmission mechanism as a \textquotedblleft\emph{first-order two-inputs
single-output}\textquotedblright\ linear model as used in dynamic games
(Simaan and Cruz (1973a)). First-order stands for one lag of the policy target
in the transmission mechanism. The first input is a policy instrument decided
by the private sector (for example output or consumption $x_{t}$). The second
input is a policy instrument decided by the policy maker (for example, nominal
funds rate $i_{t}$). The single output or single policy target can be
inflation $\pi_{t}$ (instead of the price level $p_{t}$ in the cobweb model).
The policy target and the policy instruments are are written in deviation of
their long run equilibrium values:%

\begin{equation}
\pi_{t+1}=A^{\prime}\pi_{t}+B^{\prime}x_{t}+Bi_{t}+\varepsilon_{t}\text{ with
}A^{\prime}\geq0\text{, }B^{\prime}\neq0\text{, }B\neq0\text{, }\pi_{0}\text{
given.}%
\end{equation}
Additive disturbances are denoted $\varepsilon_{t}$ and are assumed to be
identically and independently distributed. If $B^{\prime}\neq0$, the private
sector's policy instrument is correlated with the future value of the policy
target. Then, this first-order linear model is Kalman (1960a) controllable by
the private sector. If $B\neq0$, the policy maker's policy instrument is
correlated with the future value of the policy target. Then, this first-order
linear model is Kalman (1960a) controllable by the policy maker.\footnote{As a
reminder, a model exhibits Kalman controllability if the policy instruments
have a direct or indirect effect on the policy target.} Both, the private
sector and the policy maker behave according to proportional feedback rules
given by:%

\begin{equation}
x_{t}=F^{\prime}\pi_{t}\text{ and }i_{t}=F\pi_{t}\text{ with }F^{\prime}\in%
%TCIMACRO{\U{211d} }%
%BeginExpansion
\mathbb{R}
%EndExpansion
\text{ and }F\in%
%TCIMACRO{\U{211d} }%
%BeginExpansion
\mathbb{R}
%EndExpansion
\text{. }%
\end{equation}
In a first step, substituting the private sector's feedback rule in the
transmission mechanism implies:%
\begin{equation}
\pi_{t+1}=A\pi_{t}+Bi_{t}+\varepsilon_{t}\text{ where }A=A^{\prime}+B^{\prime
}F^{\prime}\text{ .}%
\end{equation}
For $A^{\prime}$ and $B^{\prime}$ given and if the values of the policy
instrument $x_{t}$ are not constrained (for example by a endowment
constraint), the private sector can choose any real value for $F^{\prime}$ and
therefor also for $A^{\prime}+B^{\prime}F^{\prime}$. As a consequence
$E_{t}\pi_{t+1}$ can take any target value.

Consider the case where the policy maker pegs its policy instrument to its
long run value ($i_{t}=0$). As we never measure negative auto-correlation for
macroeconomic time series, we can assume that $A\geq0$.\ We also never measure
zero auto-correlation (no persistence) for macroeconomic time series.
Nonetheless, we also consider the case of zero persistence $A=0$ in this
paper, because it played a crucial, but unnoticed role in the rules
\emph{versus} discretion controversy. Then, three outcomes are possible for
the private sector's feedback:

\begin{case}
$A=A^{\prime}+B^{\prime}F^{\prime}=0$ for $F^{\prime}=\frac{-A^{\prime}}{B}$.
The value of the private sector's feedback rule parameter $F^{\prime}%
$\ implies no persistence of the policy target following a random shock
without persistence.
\end{case}

\begin{case}
$0<A=A^{\prime}+B^{\prime}F^{\prime}<1$ for $\frac{-A^{\prime}}{B}<F^{\prime
}<\frac{1-A^{\prime}}{B}$ if $B^{\prime}>0$. The value of the feedback rule
parameter $F^{\prime}$\ implies persistence with stationary dynamics of the
policy target following a random shock without persistence.
\end{case}

\begin{case}
$A=A^{\prime}+B^{\prime}F^{\prime}\geq1$ for $F^{\prime}>\frac{1-A^{\prime}%
}{B}$ if $B^{\prime}>0$. The value of the feedback rule parameter $F^{\prime}%
$\ implies a diverging trend with non-stationary dynamics of the policy target
following a random shock without persistence.
\end{case}

Negative feedback and positive feedback mechanism are defined in the following way:

\begin{definition}
Negative-feedback rule parameters $F^{\prime}$\ are such that $0\leq
A^{\prime}+B^{\prime}F^{\prime}<A^{\prime}$, which implies $B^{\prime
}F^{\prime}<0$.
\end{definition}

Since any disturbance automatically causes corrective action in the opposite
direction, the parameters $B$ and $F$ have opposite signs.

\begin{definition}
Positive-feedback rule parameters $F$ are such that $0\leq A^{\prime
}<A^{\prime}+B^{\prime}F^{\prime}$, which implies $B^{\prime}F^{\prime}>0$.
\end{definition}

\begin{proposition}
Negative feedback does not imply local stability for the private sector's
policy-rule parameters $F^{\prime}$ such that $0\leq1<A^{\prime}+B^{\prime
}F^{\prime}<A^{\prime}$. The requirement for negative feedback and local
stability is therefore that the private sector's policy-rule parameter
$F^{\prime}$ satisfies: $0\leq A^{\prime}+B^{\prime}F^{\prime}<\min\left(
A^{\prime},1\right)  $. Conversely, positive feedback does not imply local
instability for the private sector's policy-rule parameters $0\leq A^{\prime
}<A^{\prime}+B^{\prime}F^{\prime}<1$.
\end{proposition}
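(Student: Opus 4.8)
The plan is to first pin down what ``local stability'' means for the closed-loop recursion $\pi_{t+1}=A\pi_{t}+Bi_{t}+\varepsilon_{t}$ with $A=A^{\prime}+B^{\prime}F^{\prime}$: with i.i.d.\ additive disturbances the process is asymptotically stable exactly when $|A|<1$, and since the paper has already restricted attention to $A\geq 0$, this is the one-sided band $0\leq A^{\prime}+B^{\prime}F^{\prime}<1$, i.e.\ Cases~1 and~2, whereas Case~3 ($A^{\prime}+B^{\prime}F^{\prime}\geq 1$) is the non-stationary, locally unstable regime. Every assertion of the proposition then reduces to comparing the feedback-sign inequalities of Definitions~1 and~2 with this band, so the whole proof is essentially bookkeeping with three quantities: $A^{\prime}$, $A^{\prime}+B^{\prime}F^{\prime}$, and $1$.

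For the first sentence I would observe that negative feedback only constrains $A^{\prime}+B^{\prime}F^{\prime}$ to lie in $[0,A^{\prime})$ and imposes no upper bound below $1$; hence whenever the transmission parameter satisfies $A^{\prime}>1$ one can pick $F^{\prime}$ (for instance $\tfrac{1-A^{\prime}}{B^{\prime}}\leq F^{\prime}<0$ when $B^{\prime}>0$) so that $1\leq A^{\prime}+B^{\prime}F^{\prime}<A^{\prime}$. Such an $F^{\prime}$ is a legitimate negative-feedback parameter by Definition~1 yet lands in Case~3, so it is not locally stable, which refutes the implication ``negative feedback $\Rightarrow$ local stability''. The ``requirement'' sentence is then immediate: conjoining negative feedback ($A^{\prime}+B^{\prime}F^{\prime}<A^{\prime}$, hence automatically $\geq 0$) with local stability ($A^{\prime}+B^{\prime}F^{\prime}<1$) is literally the statement $0\leq A^{\prime}+B^{\prime}F^{\prime}<\min\left(A^{\prime},1\right)$.

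The converse is symmetric: by Definition~2, positive feedback only requires $A^{\prime}+B^{\prime}F^{\prime}>A^{\prime}\geq 0$, which is perfectly compatible with staying below $1$ as soon as $A^{\prime}<1$. Taking $0\leq A^{\prime}<1$ and $F^{\prime}$ with $0<B^{\prime}F^{\prime}<1-A^{\prime}$ yields $A^{\prime}<A^{\prime}+B^{\prime}F^{\prime}<1$, a Case~2 parameter that exhibits positive feedback but has stationary, convergent dynamics, contradicting ``positive feedback $\Rightarrow$ local instability'' and finishing the argument.

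There is no genuinely hard step; the only care needed is in the bookkeeping with the sign of $B^{\prime}$ (the explicit $F^{\prime}$-intervals reverse when $B^{\prime}<0$, though the existence conclusions are unaffected) and in remembering that the standing hypothesis $A\geq 0$ is precisely what lets us replace $|A|<1$ by the one-sided condition $0\leq A<1$ used throughout Cases~1--3. Conceptually, the proposition merely records that the ``sign of feedback'' (the position of $A^{\prime}+B^{\prime}F^{\prime}$ relative to $A^{\prime}$) and ``local stability'' (its position relative to $1$) are logically independent dichotomies.
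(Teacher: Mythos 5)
Your proof is correct and follows exactly the route the paper intends: the paper gives no explicit proof, treating the proposition as an immediate consequence of Definitions 1--2 and Cases 1--3, and your argument simply makes that bookkeeping explicit (feedback sign is the position of $A^{\prime}+B^{\prime}F^{\prime}$ relative to $A^{\prime}$, stability its position relative to $1$, and the two are independent). The explicit counterexample intervals for $F^{\prime}$ and the remark on the sign of $B^{\prime}$ are a useful addition but do not change the approach.
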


In a second step, the policy maker chooses $A=A^{\prime}+B^{\prime}F^{\prime}$
and $B$. Substituting the policy maker's feedback-rule in the transmission
mechanism implies:%

\begin{equation}
\pi_{t+1}=(A+BF)\pi_{t}+\varepsilon_{t}\text{. }%
\end{equation}

For $A$ and $B\neq0$ given and if the values of the policy instrument are not
constrained (for example by a zero lower bound for funds rate) so that the
policy maker can choose any real value for $F$, the policy maker can target
$E_{t}\pi_{t+1}$ at any real value because he can choose any real value $A+BF$.

The condition for the policy maker's negative-feedback and stabilizing
policy-rule parameters is given by the set of parameters $F$ satisfying:
$0\leq A+BF<\min\left(  A,1\right)  $.

\subsection{Rules \emph{versus} Discretion}

Even though the main focus of the paper is the period after 1948, it is worth
mentioning Simons' (1936) article on \textquotedblleft rules versus
authorities\textquotedblright\ as a predecessor of the literature on policy
rules. According to Simons (1936), authorities are not necessarily only
related to a policy maker's negative-feedback behavior, but also to any random
or erroneous policy maker's decision, such as Gold standard \textquotedblleft
rules\textquotedblright\ or trade wars. First of all, a 100\% reserve
requirement should be set so that private banks and shadow banks cannot create
money. This would eliminate the financial instability due to a banking crisis.
Secondly, a rigid public rule on central bank public creation of money should
be fixed. Thirdly, free market competition in the real sector should prevail.

Based on these ideas, Friedman (1948) defines rules \emph{versus} discretion
as follows:

\begin{quotation}
[For government expenditures excluding transfers,] no attempt should be made
to vary expenditures, either directly or inversely, in response to cyclical
fluctuations in business activity. ... The [transfer] program [such as
unemployment benefits] should not be changed in response to cyclical
fluctuations in business activity. Absolute outlays, however, will vary
automatically over the cycle. They will tend to be high when unemployment is
high and low when unemployment is low. (Friedman (1948), p.~248)
\end{quotation}

In addition to Simons' (1936) 100\% reserve requirements by private financial
institutions, Friedman (1948) also advocated zero public debt and allowed
money supply to move cyclically in order to finance cyclical deficits or
surpluses. He shifted to a fixed money-supply growth-rate rule in Friedman
(1960): \textquotedblleft\emph{The stock of money [should be] increased at a
fixed rate year-in and year-out without any variation in the rate of increase
to meet cyclical needs}\textquotedblright. Rules can be Friedman's (1960)
fixed $k$-percent growth rate of money supply $(m_{t}=0)$, no public deficits
$(s_{t}=0)$, an interest rate peg $(i_{t}=0)$, or an exchange rate peg
$(e_{t}=0)$. These definitions are the same as in Kydland and Prescott (1977).
They will change, however, at the end of the 1970s (see section 6). Using our
framework, rules are defined as follows:

\begin{definition}
A policy maker follows a \textquotedblleft Rule\textquotedblright\ whenever he
pegs his policy instruments to their steady state values ($F=0$ and $i_{t}%
=0$), with policy target dynamics $\pi_{t+1}=A\pi_{t}+\varepsilon_{t}$.
\end{definition}

\begin{condition}
In order to have stable dynamics for the policy target with \textquotedblleft
Rules\textquotedblright, the private sector always decides to stabilize the
value of the policy-rule parameter $F^{\prime}$: $0\leq A^{\prime}+B^{\prime
}F^{\prime}<1$ (case 1\ and case 2).
\end{condition}

One could interpret this behavior of the private sector as Smith's (1776)
implicit hypothesis of market clearing.

Negative-feedback counter-cyclical fiscal policy evolved with Keynes' (1936)
``General Theory'' and the idea that the equilibrium is not automatically
reached by market forces alone, but that there exist situations where a
government intervention is necessary.

\begin{definition}
\textquotedblleft Discretion\textquotedblright\ is a policy that responds to
cyclical fluctuations of the deviations of the policy variables from their
long run target values $(i_{t}=F\pi_{t}$ with $F\neq0)$, with policy target
dynamics $\pi_{t+1}=\left(  A+BF\right)  \pi_{t}+\varepsilon_{t}$.
\end{definition}

\begin{condition}
In order to have stable dynamics for the policy target and policy maker's
negative-feedback, the policy maker decides policy rule parameter $F$ to
satisfy $0<A+BF<A<1$ in case 2, or to satisfy $0<A+BF<1<A$ in case 3. In case
1, $0\leq A+BF\leq A=0$, a discretionary policy with $BF\neq0$ adds
persistence with respect to a policy pegging the policy instrument to its long
run value ($F=0$): negative-feedback cannot be achieved with $F\neq0$.
\end{condition}

Using the above framework, table 1\ summarizes distinctive features of the
rules \emph{versus} discretion (1948-1993) controversy.

\textbf{Table 1: Rules \emph{versus} Discretion, Stabilization Policy
Ineffectiveness \emph{versus} Negative-feedback Controversy (1948-1993).}%

\begin{tabular}
[c]{|c|c|c|}\hline
& \textbf{\textquotedblleft Rules\textquotedblright, \textquotedblleft
Laissez-faire\textquotedblright} & \textbf{\textquotedblleft
Discretion\textquotedblright}\\\hline
Policy & Ineffective & Effective\\\hline
Proponents & Friedman, Barro, Gordon & Phillips, Taylor\\\hline
Fischer's label & Inactive (Passive) & Activist\\\hline
Feedback & No feedback, Peg & Negative feedback\\\hline
\textquotedblleft Rule\textquotedblright & $i_{t}=F\pi_{t}=0$ with $F=0$ &
$i_{t}=F\pi_{t}$ with $F\in D_{nf}$\\\hline
Transmission & $\pi_{t+1}=A\pi_{t}+Bi_{t}+\varepsilon_{t}$ & $\pi_{t+1}%
=A\pi_{t}+Bi_{t}+\varepsilon_{t}$\\\hline
Controllability & Not necessary: $B\in%
%TCIMACRO{\U{211d} }%
%BeginExpansion
\mathbb{R}
%EndExpansion
$ & Necessary: $B\neq0$\\\hline
Optimal & $A=0$ & $A>0$\\\hline
Phillips curve & Static: $A=0$ & Accelerationist: $A=1$\\\hline
Dynamics & $E_{t}\pi_{t+1}=A^{t}\pi_{0}$ & $E_{t}\pi_{t+1}=\left(
A+BF\right)  ^{t}\pi_{0}<A^{t}\pi_{0}$\\\hline%
\begin{tabular}
[c]{c}%
Target $\pi_{t}$\\
persistence\\
condition
\end{tabular}
&
\begin{tabular}
[c]{c}%
Private sector\\
necessarily stationary\\
dynamics: $0\leq A<1$%
\end{tabular}
&
\begin{tabular}
[c]{c}%
$0\leq A+BF<\min\left(  A,1\right)  $\\
$BF<\min\left(  0,1-A\right)  $\\
$F$ opposite sign of $B$%
\end{tabular}
\\\hline
Lucas critique & $\ A$ not a function of $F$ & $A+BF$ function of $F$\\\hline
Time inconsistent & Static model: irrelevant & Time inconsistent if $\pi_{t}$
jumps\\\hline
Control label & Open loop & Closed loop\\\hline
\end{tabular}
\bigskip

Prominent supporters of the \textquotedblleft rules\textquotedblright\ side
are Friedman (1948, 1960) and Barro and Gordon (1983a and b). Their implicit
assumption is case 1 ($A=0$) and static models without lags of the policy
target (Blanchard and Fischer (1989), p.581). Barro and Gordon (1983a and b)
consider a static Phillips curve ($A=0$).

Prominent supporters of the \textquotedblleft discretion\textquotedblright%
\ side are Phillips (1954b), Taylor (1993, 1999). They implicitly assume a
dynamic model with trend (case 3: $A>1$) and possibly with stationary
persistence (case 2: $0<A<1$). In case 2, persistence $A$\ is not too small so
that the reduction of persistence subtracting $BF<0$ is not negligible. Taylor
(1999) and Fuhrer (2010) consider an accelerationist Phillips curve ($A=1$),
which may be related to trend inflation in the 1970s in the USA. Volcker's
discretionary monetary policy during 1979-1982 reversed non-stationary trend
inflation into stationary inflation $0\leq A+BF<1\leq A$. The rule parameter
$F$\ is a bifurcation parameter, for given values of the parameters $A$ and
$B$ of monetary policy transmission mechanism.

In control theory, the policy responses are always conditional on the
transmission mechanism. It does not make sense to put forward a policy rule
without specifying the policy transmission mechanism. For Nelson (2008, p.95),
Friedman and Taylor agreed "\emph{on the specification of shocks, policy
makers' objectives and trade-offs. Where they differed was on the extent to
which structural models should enter the monetary policy decision making
process.}"

\section{The Transplants of Control Tools from Engineering to Stabilization
Policy}

Smith did not refer to engines as an analogy when he explained the
negative-feedback mechanism in the \textit{Wealth of Nations} (see Mayr
(1971)) even though he knew Watt and engineers' machines using
negative-feedback. The concept of negative feedback was used by economist most
of the time without a reference to engineer's techniques until classic control
emerged. With respect to the modeling of macroeconomic stabilization policy
during the period of the rules \emph{versus} discretion controversy 1948-1993,
there have been three stages of implementation of control theory according to
Zhou, Doyle and Glover (1996) and Hansen and Sargent (2008). The first one is
classic control without a loss function in the 1950s, with proportional,
integral and derivative (P.I.D) policy rules. The second one is optimal
control including a quadratic loss function with a Kalman linear quadratic
regulator, optimal state estimation with a Kalman filter and stochastic
optimal control merging both methods with the linear quadratic Gaussian model
in the 1960s. The third stage is robust control which takes into account
uncertainty on the parameters of the policy transmission mechanism in the 1980s.

\subsection{The Fast Transplant of Classic Control}

Tustin (1953), an electrical engineer at the University of Birmingham,
mentions to have started in 1946 applying classic control methods used in
electrical systems to Keynesian macro-models (Bissell (2010)). Phillips
(1954a), an electrical engineer hired at the London School of Economics, wrote
a two page book review on Tustin (1953) in the Economic Journal. He built the
hydraulic computer MONIAC (Monetary National Income Analogue Computer) in 1949
(Leeson (2011)). The MONIAC is a series of connected glass tubes filled with
water where the flow represented GNP and the feedback system represented the
use of monetary and fiscal policy. Phillips (1954b (from Phillips' PhD) and,
1957) used proportional, integral and derivative (P.I.D.) rules of classic
control to stabilize an economic model using negative-feedback mechanism
(Hayes (2011)). Taylor's (1968) master thesis merged Phillips' (1961) model of
cyclical growth with Phillips' (1954b) proportional, integral and derivative
negative-feedback stabilization rules. Thirty-nine years after Phillips
(1954b) and twenty-five years after Taylor's (1968) P.I.D rules, the Taylor
(1993) rule is a proportional (P) feedback rule of classic control. Taylor, in
Leeson and Taylor (2012)), explains why he took sides with discretion:

\begin{quotation}
I viewed policy rule as a natural way to evaluate policy in the kinds of
macroeconomic models which I learned and worked on at Princeton and Stanford.
It was more practical than philosophical or political. (Leeson and Taylor (2012))
\end{quotation}

We now highlight how classic control takes sides with \textquotedblleft
discretion\textquotedblright: The policy maker targets his preferred
persistence of the time-series of policy targets, which determines his
preferred speed of convergence of these policy targets to their long run
equilibrium. For example, a central bank could do an inflation persistence
targeting of an auto-correlation of inflation of $0.8$, satisfying a stability
and negative feedback condition: $0\leq\lambda^{\ast}=A+BF^{\ast}%
=0.8<\min(A,1)$. This decision is called in classic control \textquotedblleft
pole placement\textquotedblright, because $\lambda$ is a pole or a zero of the
polynomial at the denominator of the Laplace transform of the closed-loop system.

Accordingly, the policy maker decides on a policy rule parameter $F^{\ast
}=\frac{\lambda^{\ast}-A}{B}=\frac{0.8-A}{B}$ in the case of a proportional
feedback rule. For example, if there is a negative marginal effect of the
funds rate on inflation ($B<0$), the policy rule $F^{\ast}$ is an affine
decreasing function of the inflation persistence target $\lambda^{\ast}$.

Taylor's (1999) transmission mechanism is an accelerationist Phillips curve
(where $x_{t}$ is the output gap and $a$ is the slope of the Phillips curve)
and an investment saving (IS) equation:%

\begin{equation}
\pi_{t+1}=\pi_{t}+ax_{t}\text{, }a>0\text{ and }x_{t}=-b(i_{t}-\pi_{t})\text{,
}b>0.
\end{equation}

So that the transmission mechanism of monetary policy is such as $A=1+ab>1$:%

\begin{equation}
\pi_{t+1}=\left(  1-B\right)  \pi_{t}+Bi_{t}\text{ with }B=-ab<0.
\end{equation}

The Taylor principle states that the funds rate should respond by more than
one to deviation of inflation from its long run target ($F>1$). The Taylor
principle corresponds to the classic control condition for negative-feedback
rule parameters such that $0<A+BF<\min(1,A)$, for models such that $B<0$ and
$A=1-B>1$ (Taylor (1999)):%

\begin{equation}
0\leq A+BF=1-B+BF<1\text{ and }B<0\Rightarrow1<F<-\frac{B}{A}=-\frac{B}{1-B}.
\end{equation}

The upper bound condition on $F$ corresponds to zero persistence of inflation.

\subsection{The Fast Transplant of Optimal Control}

The second step in the transfer of methods used in engineering to economic
modeling is the introduction of optimal control in the 1950s, see Duarte
(2009) and Klein (2015). In optimal control, a quadratic loss function is
minimized subject to linear dynamic equations. Using a certainty-equivalence
property, normal disturbances with zero mean can be added to the model
according to Simon (1956) and Theil (1957).

In the 1950s, Simon, Holt, Modigliani and Muth came to the \textit{Graduate
School of Industrial Administration} at the \textit{Carnegie Institute of
Technology} in Pittsburgh. Holt had come from an engineering background at
M.I.T. and Simon's father was an electrical engineer after earning his
engineering degree in Technische Hochschule Darmstadt.\footnote{Phillips' and
Taylor's fathers were also engineers, (Leeson and Taylor (2012)} (Leeson and
Taylor (2012)). They applied control methods to microeconomics by computing
variables for production, inventories and the labor force of a firm. Optimal
linear decision rules from linear quadratic models were computed for specific
economic models of firm's production by Holt, Modigliani and Simon (1955) and
Holt, Modigliani and Muth (1956), (Singhal and Singhal (2007)). Holt (1962)
developed an optimal control model to analyze fiscal and monetary policy.

Kalman (1930-2016), an electrical engineer, wrote the key paper for solving
linear quadratic optimal control (linear quadratic regulator,
LQR).\footnote{He was invited to participate in the world econometric congress
in 1980 in Aix en Provence, but his paper was not published in Econometrica.
He later received the highest honor of US science, the US\ medal of science in
2009.} He extended the static Tinbergen (1952) principle, namely that there
should be as many policy instrument as policy targets, to a dynamic setting.
Kalman's (1960a) controllability definition is such that a single instrument
can control for example three policy targets, but in three different periods
(Aoki (1975)). Masanao Aoki (1931-2018) was a Japanese professor of
engineering at UCLA and California, Berkeley from 1960-1974, before switching
fields to economics. Kalman's (1960a) linear quadratic regulator sets the
solution of stabilization policy facing a quadratic loss function solving
matrix Riccati equations. Textbooks include Sworder (1966) and Wonham (1974)
among others.

The diffusion of control techniques to macroeconomics was complementary to the
development of large scale macroeconomic models:

\begin{quotation}
Professor Bryson was offering a course in the control theory in 1966 that
caught the attention of a small group of economics graduate students and young
faculty members at Harvard... Rod Dobell, Hayne Leland, Stephen Turnovsky,
Chris Dougherty, Lance Taylor and I persisted... Two control engineers who had
shifted their interest to economics -- David Livesey at Cambridge University
and Robert Pindyck at MIT -- developed macroeconomic control theory models (in
1971 and 1972).... In May of 1972, a meeting of economists and control
engineers was arranged at Princeton University by three economists (Edwin Kuh,
Gregory Chow and M. Ishaq Nadiri) and a control engineer. The meeting which
was attended by about 40 economists and 20 engineers was to explore the
possibility that the application of stochastic control techniques, which had
been developed in engineering, would prove to be useful in economics as well
(Athans and Chow, 1972).... Another British-trained control engineer, Anthony
Healy, who was teaching at the University of Texas at the time, took an
interest in economic models and applied the use of feedback rules to a
well-known model that had been developed at the St. Louis Federal Reserve Bank
(FRB). (Kendrick (2005), p.~7-8)
\end{quotation}

In the ongoing debate ``discretion'' \emph{versus} ``rules'', optimal control
-- with a policy maker minimizing a loss function -- was used by
macroeconomists in favor of ``discretion'', whereas those in favor of
``rules'' model the private sector's optimal behavior with stationary policy
targets ($0<A=A^{\prime}+B^{\prime}F^{\prime}<1$). Nonetheless, even in this
case, optimal control by the policy maker is still able to decrease the loss
function further.

Optimal control is filling a gap in the \textquotedblleft pole
placement\textquotedblright\ method of classic control where the criterion for
choosing the persistence $\lambda^{\ast}$ of the policy target is not
explicitly stated ($0<\lambda^{\ast}=A+BF^{\ast}<1$). Kalman's optimal control
uses a quadratic loss function with the possibility of discounting future
periods with a factor $\beta$ and non-zero quadratic cost of changing the
policy instrument, $R>0$, in order to ensure concavity. The relative cost of
changing the policy instrument ($R/Q$) represents e.g. central banks'
interest-rate smoothing or governments' tax smoothing. In the case of the
private sector, it corresponds to households' consumption smoothing or firms'
adjustment costs of investment. Maximize%
\begin{equation}
-\frac{1}{2}%
%TCIMACRO{\dsum \limits_{t=0}^{+\infty}}%
%BeginExpansion
{\displaystyle\sum\limits_{t=0}^{+\infty}}
%EndExpansion
\beta^{t}\left(  Q\pi_{t}^{2}+Ri_{t}^{2}\right)  \text{, with }\ R>0\text{,
}Q\geq0\text{ and }0<\beta\leq1,
\end{equation}
subject to the same transmission mechanism as the one of classic control.

Solving this linear quadratic regulator yields two roots of a characteristic
polynomial of order two, one of them stable, the other one unstable. For this
reason, Blanchard and Kahn (1980) call this solution \textquotedblleft
saddlepath stable\textquotedblright, in a space which adds co-state variables
(Lagrange multipliers) and state variables (policy targets). This implies that
the dynamics of the state variables is stable, exactly like in classic control.

Optimal persistence $\lambda^{\ast}$ is a continuous increasing function of
the relative cost of changing the policy instrument: $\lambda^{\ast}\left(
\frac{R}{Q}\right)  $. The relation between the persistence of the policy
target and policy-rule parameter, $A+BF^{\ast}=\lambda^{\ast}$, is the same
for optimal control as for classic control. Unless the targeted persistence
does not belong to the interval: $\lambda^{\ast}\in\left]  0,\min\left(
A,\frac{1}{\beta A}\right)  \right[  $, a simple rule derived from classic
control with an \emph{ad hoc} targeted persistence $\lambda^{\ast}$\ is
observationally equivalent to an optimal rule $\lambda^{\ast}\left(  \frac
{R}{Q}\right)  $, with identical predictions and behavior of the policy maker.
A reduced form of a simple rule parameter $F$ corresponds to an optimal rule
$F\left(  \frac{R}{Q}\right)  $ with preferences $\frac{R}{Q}$. There exist
preferences of the policy maker that \textquotedblleft
rationalize\textquotedblright\ an estimated value of a simple rule parameter
$F$ to be a reduced form of an optimal rule parameter $F\left(  \frac{R}%
{Q}\right)  $.

\begin{proposition}
A rule pegging the policy instrument ($i^{\ast}=0$, $F=0$) is \textbf{optimal}
for a quadratic loss function (which is bounded if: $\beta A^{2}\neq1$), with
a first-order single policy maker's instrument single-policy-target
transmission mechanism:

(i) for stationary persistence of the policy target $0<A<1/\sqrt{\beta}$ and
for a zero weight ($Q=0$) on the volatility of the policy target in the policy
maker's loss function.

(ii) for zero persistence of the policy target $A=0$ and a positive weight
$Q\geq0$ of the volatility of the policy target in the loss function.
\end{proposition}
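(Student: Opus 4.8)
The plan is to solve the linear quadratic regulator explicitly and read off the two conditions. First I would set up the first-order-conditions for maximizing $-\tfrac12\sum_{t\ge0}\beta^t(Q\pi_t^2+Ri_t^2)$ subject to $\pi_{t+1}=A\pi_t+Bi_t+\varepsilon_t$. Introducing a costate $\mu_t$ for the transmission constraint, the stationarity conditions are $Ri_t=\beta B\,\mu_{t+1}$ (instrument) and $\mu_t=\beta A\,\mu_{t+1}+Q\pi_t$ (costate). Eliminating $i_t$ gives a linear two-dimensional system in $(\pi_t,\mu_t)$ whose characteristic polynomial is the order-two polynomial mentioned just above the proposition; by the Blanchard–Kahn saddlepath property one root is inside and one outside the unit circle, and the stable manifold pins down $i_t=F^\ast\pi_t$ with $A+BF^\ast=\lambda^\ast$, where $\lambda^\ast=\lambda^\ast(R/Q)$ is the stable root. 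The boundedness qualifier $\beta A^2\neq 1$ is exactly the condition that the quadratic objective along the $F=0$ path, $-\tfrac{Q}{2}\sum_t\beta^t A^{2t}\pi_0^2$, converges (a geometric series with ratio $\beta A^2$), so I would note that first.

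For part (ii), set $A=0$. Then the transmission mechanism reduces to $\pi_{t+1}=Bi_t+\varepsilon_t$, so $E_t\pi_{t+1}=Bi_t$, and since the disturbances are i.i.d. with zero mean, $\pi_t$ for $t\ge1$ is, conditional on date-$t$ information, just $\varepsilon_{t-1}$ plus whatever the instrument contributed. The key observation is that the date-$0$ target $\pi_0$ is given and cannot be affected by any $i_t$, while for every $t\ge0$ the choice $i_t$ affects $\pi_{t+1}$ but contributes a strictly positive cost $\tfrac12\beta^{t+1}Q\,(Bi_t)^2$-worth of expected target volatility plus $\tfrac12\beta^t R i_t^2$ directly, with no offsetting benefit because there is no persistence channel through which a nonzero $i_t$ could reduce future target volatility. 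Hence term-by-term the expected loss is minimized at $i_t=0$ for all $t$, for any $Q\ge0$ and any $R>0$; I would make this rigorous by writing the expected objective as a sum of per-period terms $-\tfrac12\beta^t(R i_t^2)-\tfrac12\beta^{t+1}Q\,E(\pi_{t+1}^2)$ and using $E(\pi_{t+1}^2)=B^2 i_t^2+\sigma_\varepsilon^2$ (the cross term vanishes by independence), so each term is separately maximized at $i_t=0$. This also covers the boundary $Q=0$ trivially.

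For part (i), set $Q=0$. Then the loss is $-\tfrac{R}{2}\sum_t\beta^t i_t^2$, which for $R>0$ is maximized outright at $i_t=0$ regardless of the dynamics, provided the resulting path is admissible and the objective is finite; the finiteness/boundedness of the problem is where the stationarity restriction $0<A<1/\sqrt\beta$ (equivalently $\beta A^2<1$) enters, since with $Q=0$ the objective itself is identically zero along the peg but one still wants the class of feasible paths to be well-defined — I would remark that $\beta A^2<1$ guarantees the value function is finite and the peg is the unique optimum rather than merely a critical point. In both cases the final step is to verify that $i^\ast=0$, i.e. $F=0$, is consistent with the stable-manifold selection: either $\lambda^\ast=A$ falls in the stated interval $\left]0,\min(A,1/(\beta A))\right[$ — but here the upper endpoint is $A$ itself, so $\lambda^\ast=A$ is the limiting (boundary) case, which is precisely the degenerate situation in which the optimal rule coincides with the peg.

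I expect the main obstacle to be bookkeeping around what "optimal" means at the boundary: in part (i) with $Q=0$ every policy attains loss zero up to the instrument cost, so "optimal" must be read as "the peg minimizes the loss," and one must be careful that the admissible-path/transversality conditions do not secretly exclude $F=0$; in part (ii) the only subtlety is confirming the cross term $E(\varepsilon_t i_t)=0$, which follows because $i_t$ is measurable with respect to information dated $t$ and $\varepsilon_t$ is the date-$t$ innovation with zero mean — a standard but essential use of the i.i.d. assumption on $\varepsilon_t$ stated with the transmission mechanism.
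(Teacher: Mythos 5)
Your proof of part (ii) is essentially the paper's: both reduce the problem to a sequence of independent per-period minimizations of $\beta Q\pi_{t+1}^{2}+Ri_{t}^{2}$ subject to $\pi_{t+1}=Bi_{t}$, each solved by $i_{t}=0$; you merely add the (correct and harmless) verification that the cross term $E(\varepsilon_{t}i_{t})$ vanishes. For part (i) you take a genuinely different and more elementary route. The paper works through the linear quadratic regulator: it parametrizes the optimal closed-loop root as $\lambda^{\ast}(R/Q)$ ranging over $\left]0,\min\left(A,\tfrac{1}{\beta A}\right)\right[$, sends $R/Q\rightarrow+\infty$ when $Q=0$, and observes that the limiting root equals $A$ (hence $F^{\ast}=\frac{\lambda^{\ast}-A}{B}=0$) precisely when $A<\tfrac{1}{\beta A}$, i.e.\ $\beta A^{2}<1$; for $A>1/\sqrt{\beta}$ the saddle-path root is capped at $\tfrac{1}{\beta A}<A$ and $F^{\ast}$ stays bounded away from zero even with zero weight on the target. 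You instead note that with $Q=0$ the objective collapses to $-\tfrac{R}{2}\sum_{t}\beta^{t}i_{t}^{2}\leq0$, which is globally maximized by $i_{t}\equiv0$. Your argument is shorter and self-contained, but it proves the conclusion for \emph{any} $A$ and therefore does not exhibit where the hypothesis $0<A<1/\sqrt{\beta}$ actually bites: that hypothesis matters only once one restricts attention to the saddle-path (transversality-respecting) LQR solution, in which the discounted stable root must satisfy $\beta\lambda^{2}<1$. You flag this tension yourself (``one must be careful that the admissible-path/transversality conditions do not secretly exclude $F=0$'') but leave it unresolved, whereas resolving it is exactly the content of the paper's two displayed inequalities for $F^{\ast}$. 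This is not a gap in the proof of the stated claim, but if you want your argument to explain the proposition rather than just verify it, you should add the paper's observation that for $A>1/\sqrt{\beta}$ the upper endpoint of the admissible persistence interval is $\tfrac{1}{\beta A}$ rather than $A$, so the peg is no longer the $R/Q\rightarrow+\infty$ limit of the optimal rule.
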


\begin{proof}
(i) With zero weight on the policy target ($Q=0$), the relative cost of
changing the policy instrument is infinite ($R/Q\rightarrow+\infty$), which
corresponds to maximal inertia of the policy. Only the first of these two
cases allows $F^{\ast}=0$. In the second case, $\ F^{\ast}=0$ corresponds to
$\beta A^{2}=1$ with unbounded utility (see appendix).%

\begin{align*}
0  &  \leq F^{\ast}=\frac{\lambda^{\ast}-A}{B}\leq\frac{-A}{B}\text{, if
}0<A<\frac{1}{\sqrt{\beta}}\text{, }B<0\text{, }0<\beta\leq1.\\
\frac{\frac{1}{\beta A}-A}{B}  &  \leq F^{\ast}=\frac{\lambda^{\ast}-A}{B}%
\leq\frac{-A}{B}\text{, if }A>\frac{1}{\sqrt{\beta}}\text{ and }B<0\text{,
}0<\beta\leq1.
\end{align*}

If the feedback parameter is zero ($F=0$), optimal policy corresponds to a
rule ($i^{\ast}=0$). The policy target dynamics $\pi_{t}=A\pi_{t-1}$ is not
taken into account in the expected loss function.

(ii) For $A=0$ and for $B\neq0$, the system is controllable with a non-zero
effect of the policy instrument on the policy target:%
\[
Min\text{ }\left(  \beta Q\pi_{t}^{2}+Ri_{t-1}^{2}\right)  \text{ subject to:
}\pi_{t}=Bi_{t-1}\Rightarrow Min\text{ }\left(  \left(  \beta Q+R\frac
{1}{B^{2}}\right)  \pi_{t}^{2}\right)  .
\]
For each period with identical repeated optimizations, a rule ($i^{\ast}%
=0=\pi^{\ast}$) is the optimal solution, whatever the magnitude of the
relative cost of changing the policy instrument $R/Q>0$.
\end{proof}

The weakness of static analysis, however, was stated by Phillips (1954b) in a
seminal paper on stabilization:

\begin{quotation}
The time path of income, production and employment during the process of
adjustment is not revealed. It is quite possible that certain types of policy
may give rise to undesired fluctuations, or even cause a previously stable
system to become unstable, although the final equilibrium position as shown by
a static analysis appears to be quite satisfactory. (Phillips (1954a, p.290))
\end{quotation}

\textquotedblleft Rules\textquotedblright\ are optimal for a \emph{static}
model $A=0$, but in a \emph{dynamic} model where $A\geq1$, rules lead to
instability and huge welfare losses. In addition, the hypothesis $A=0$ cannot
explain the observed persistence of all macroeconomic time series, which are
stationary if $0<A<1$ or non-stationary if $A\geq1$. It cannot explain the
variations of the non-zero persistence of inflation ($A+BF>0$), such as the
acceleration of inflation or the disinflationary period during 1973\ to 1982
with observed changes of the Fed's policy behavior during Volcker's mandate.

Optimal policy (rules) in a repeated \emph{static} model of the transmission
mechanism ($A=0$, as with e.g.~a Phillips curve) is never a modeling shortcut
with results extended by analogy to optimal policy in \emph{dynamic} models of
the transmission mechanism ($A\geq1$, as with e.g.~an accelerationist Phillips
curve or a new-Keynesian Phillips curve) where discretion is optimal.

The above proposition for \textquotedblleft rules\textquotedblright\ is bad
news for the proponents of \textquotedblleft rules\textquotedblright\ because
\textquotedblleft rules\textquotedblright\ are sub-optimal in dynamic models
with policy makers using optimal control. Therefore, it was top of the
research agenda of the proponents of \textquotedblleft rules\textquotedblright%
\ in the controversy with discretion (that is, the opponents of Keynesian
stabilization policy) to seek opportunities in order to add distortions in
order to "prove" that optimal control techniques could not be used by policy
makers for stabilization policy.

In addition, the above proposition renders useless Friedman (1953) results
recently put forward by Forder and Monnery (2019) as a contribution for his
Nobel prize. He considers the simplest static model for the policy
transmission mechanism:
\[
\pi_{CL}=\pi_{OL}+i.
\]

The policy target randomly deviates from its zero equilibrium value with the
value $\pi_{OL}$ if the policy instrument $i$ is set to zero (open loop).
After a change of the policy instrument $i$, the closed loop value of the
policy target is equal to $\pi_{CL}$. The policy maker's loss function is the
volatility of the closed loop policy target $\sigma_{\pi_{CL}}^{2}$. The loss
function is obviously minimized $\sigma_{\pi_{CL}}^{2}=0$ for this optimal
negative feedback rule: $i=-\pi_{OL}$, which implies $\pi_{CL}=0$, $\rho
_{i\pi_{OL}}=-1$ and $\sigma_{i}=\sigma_{\pi_{OL}}$. Friedman's (1953)
condition for sub-optimal \emph{negative feedback} rules is obtained after
substitution of the transmission mechanism in the loss function:%
\[
\sigma_{\pi_{CL}}^{2}=\sigma_{\pi_{OL}}^{2}+\sigma_{i}^{2}+2\rho_{i\pi_{OL}%
}\sigma_{i}\sigma_{\pi}<\sigma_{\pi_{OL}}^{2}\Rightarrow-1\leq\rho_{i\pi_{OL}%
}<-\frac{1}{2}\frac{\sigma_{i}}{\sigma_{\pi_{OL}}}.
\]

Friedman (1953) discusses verbally the effect of lags of the transmission
mechanism. His mathematical formulation of the model, however, has no lags. As
mentioned by Phillips (1954b), a negative feedback rule for a static model can
drive instability in a dynamic model.

\subsection{A Conflict between the Kalman Filter and Stochastic Optimal
Control?}

Additionally to the linear quadratic regulator (LQR, Kalman (1960a)) Kalman
developed a filter-estimate that recursively takes in each period the new data
of the current period into account (Kalman (1960b)). This filter is used, for
example, in the global positioning system (GPS) since its inception in 1972.

\begin{quotation}
It is fitting that at the conference [1st world IFAC conference, Moscow, 1960]
Kalman presented a paper, 'On the general theory of control systems' (Kalman,
1960) that clearly showed that a deep and exact duality existed between the
problems of multivariable feedback control and multivariable feedback
filtering and hence ushered in a new treatment of the optimal control problem.
(Bennett, 1996, p.~22).
\end{quotation}

Optimal trajectories derived from the linear quadratic regulator (LQR) and
optimal state estimation using Kalman's filter for linear quadratic estimation
(LQE) were unified in the linear quadratic Gaussian (LQG) system. It is one of
the main tools of Stochastic Optimal Control, see Stengel (1986) and Hansen
and Sargent (2008). The Kalman filter has been quickly implemented in rational
expectations theory, even though an insufficient emphasis on measurement can
be observed.

The transplant of stochastic optimal\ control and the Kalman filter to
macroeconomics was very fast. For example, during his Ph.D. (1973) in
Stanford, Taylor participated in seminars that discussed books linking control
and time-series, such as Whittle's (1963) \textit{Prediction and Regulation
}or Aoki's (1967) \textit{Optimization of stochastic systems}, see Leeson and
Taylor (2012). Hansen and Sargent (2007) confirm:

\begin{quotation}
A profitable decision rule for us has been, `if Peter Whittle wrote it, read
it.' Whittle's 1963 book \emph{Prediction and Regulation by Linear Least
Squares Methods} (reprinted and revised in 1983), taught early users of
rational expectations econometrics, including ourselves, the classical time
series techniques that were perfect for putting the idea of rational
expectations to work. (Hansen and Sargent (2008), p.~xiii)
\end{quotation}

Independently of Kalman (1960b), Muth (1960) updated conditional expectations
based on new information in a simple model. Hansen and Sargent (2007)
highlight that Muth (1960) is a particular case of Kalman's (1960b) filter
estimation. In the early 1970s, following Muth (1960), Lucas solved several
rational expectations models using variants of a Kalman filter (Boumans
(2020)). Most of the time, Lucas used a Kalman filter as a tool for solving
theoretical models of rational expectations, instead of using it for what the
Kalman filter is designed for in the field of control, namely the practical
empirical estimation of the state variables and of the parameters of the
transmission mechanism using time series.

Based on the observations of a few Lucas' papers in the early 1970s and
following Sent (1998), Boumans (2020) infers that engineering mathematics did
not unify optimal trajectories and optimal state estimation, whereas they were
in fact unified in the early 1960s in stochastic optimal control:

\begin{quotation}
Engineering mathematics, however, is not one and unified field. This paper
shows that the mathematical instructions come from informational mathematics,
which should be distinguished from control engineering. (Boumans (2020)).
\end{quotation}

It may appear that the Kalman filter and stochastic optimal control are
distinct approaches because Lucas used a Kalman filter in theoretical papers
in the early 1970s and later dismissed stochastic optimal control for policy
makers in the Lucas (1976) critique. In basic engineering textbooks, such as
Stengel (1986), however, optimal trajectories (Boumans' (2020)
\textquotedblleft control engineering\textquotedblright) and optimal state
estimation including the Kalman filter (Boumans' (2020) \textquotedblleft
informational mathematics\textquotedblright) are merged into stochastic
optimal control:

\begin{quotation}
Not surprisingly, the control principle of chapter 3 and the estimation
principles of chapter 4\ can be used together to solve the stochastic optimal
control problem. (Stengel (1986), p.~420)
\end{quotation}

This artificial split between policy makers' optimal trajectories and optimal
state estimation done by Lucas was in practice driven by the rules
\emph{versus} discretion controversy. This explains Lucas' change of
perspective and tools. Optimal control, and especially stochastic optimal
control with a policy maker reestimating in each period the optimal policy is
clearly a method favored by macroeconomists in the camp \textquotedblleft
discretion\textquotedblright. Hence, Lucas, as a proponent of
\textquotedblleft rules\textquotedblright, took the opportunity to break the
dual approaches unified by stochastic optimal control in two parts. The
advocacy of particular theoretical views taking sides in a controversy and the
selection of mathematical tools from control goes hand-in-hand. Scientific
tools are part of the scientific rhetorics in order to convince a majority of
scientists to opt for one side of the controversy.

\begin{quotation}
The present study, then, provides a rationalization for rules with smooth
monetary policy, exactly as did the earlier studies of Lucas, Sargent and
Wallace, and Barro. Similarly, it rationalizes the analogous fiscal rule of
continuous budget balancing and rules to stabilize the quantity of private
money, such as larger reserve requirements for banks. (Lucas (1975), p.~1115)
\end{quotation}

Barnett (2017) identifies the different emphasis on measurement as the main
methodological difference between rocket science and macroeconomics, with both
fields using control methods. William A. Barnett had a BS degree in mechanical
engineering from MIT in 1963. He worked as an \textquotedblleft rocket
scientist\textquotedblright\ at Rocketdyne from 1963 to 1969, a contractor for
the Apollo program. He then got a Ph.D. in statistics and economics from
Carnegie Mellon University in 1974 at the same time as Finn Kydland. He worked
for the models of the Fed soon after.

\begin{quotation}
The different emphasis on measurement is very major, especially between
macroeconomics and rocket science... In real rocket science, engineers are
fully aware of the implications of systems theory, which emphasizes that small
changes in data or parameters can cause major changes in system dynamics. The
cause is crossing a bifurcation boundary in parameter space... But when policy
simulations of macroeconometric models are run, they typically are run with
the parameters set only at their point estimates. For example, when I was on
the staff of the Federal Reserve Board, I never saw such policy simulations
delivered to the Governors or to the Open Market Committee with parameters set
at any points in the parameter estimators' confidence region, other than at
the point estimate. This mind-set suggests to macroeconomists that small
errors in data or in parameter estimates need not be major concerns, and hence
emphasis on investment in measurement in macroeconomics is not at all
comparable to investment in measurement in real rocket science. (Barnett
(2017), p.22)
\end{quotation}

Lucas, Kydland and Prescott did not increase the emphasis on measurement with
respect to modelers at the Fed in the 1970s. They had prior theoretical views
taking sides for ``rules'' in the debate on rules \emph{versus} discretion. As
well as optimal control, using a Kalman filter for optimal state estimation
using macroeconomic time series would give a chance to challenge these prior
views. Hence, it was a rhetorical strategy in the rules \emph{versus}
discretion controversy to challenge measurement and identification strategies.

Lucas (1976) presented as impossible the econometric identification of
parameters in the case of reverse causality due to feedback rules. When
confronting their model with data, Kydland and Prescott (1982) decided not to
use Kalman-filter optimal-state estimations and avoid identification issues,
but did simulations. The increased distance between econometricians and
new-classical macroeconomists with respect to identification issues in
econometric estimations using a Kalman filter, led to a selection bias with
respect to the tools and the practice of control. In the field of control, the
measurement of the transmission mechanism is viewed as much more important
than theoretical models based on a priori hypothesis taking sides in a
controversy. This increased the gap between the macroeconomics of
stabilization policy and the field of control in the 1980s.

Kalman, who worked for the Apollo program, emphasized that the most important
issue for stabilization of dynamic systems is the \emph{accuracy of the
measurement} and the estimation of parameters of the system of policy
transmission mechanism ($A$ and most importantly $B$ and its sign) instead of
a priori modeling of the real world. For Kalman, measurement is above all
necessary for optimal policy rules, so one should not ``separate'' filtering
estimation with data from optimal trajectories found with control. One should
not separate macroeconomic theory from econometrics using macroeconomic
time-series. By contrast, the linear quadratic gaussian (LQG) system merging
optimal policy and Kalman filter effective estimation with macroeconomic time
series has not been widely used during the the 1980s.

\subsection{The Delayed Transplant of Robust Control}

The third stage of control theory, namely robust control (Zhou et al. (1996),
Hansen and Sargent (2008)), emerged in the following of Doyle's (1978)
two-pages paper presenting a counter-example where the Kalman linear quadratic
Gaussian (LQG) model has not enough guaranteed margins to ensure stability.

Robust control assumes that the knowledge of the transmission parameters is
uncertain on a known \emph{finite interval} with a given sign. In our example,
it may correspond to $B_{\min}<B<B_{\max}<0$. An evil agent tries to fool as
much as possible the policy maker, which reminds of Descartes (1641) evil demon:

\begin{quotation}
I will suppose therefore that not God, who is supremely good and the source of
truth, but rather some malicious demon, had employed his whole energies in
deceiving me. (Descartes (1641))
\end{quotation}

The policy maker is the leader of a Stackelberg dynamic game against the evil
agent. The policy maker is minimizing the maximum of the losses in the range
of uncertainty on the parameters. Kalman's solutions of matrix Riccati
equation for the linear quadratic regulator remain instrumental for finding
the solutions of robust control.

For monetary policy, the uncertainty on parameters was put forward by Brainard
(1967), so that, indeed, this is a very important issue. But the transplant of
the new methods of robust control was delayed in the 1980s. Very soon after
robust control tools emerged, Von\ Zur Muehlen (1982) wrote a working paper
applying robust control to monetary policy which was never published in an
academic journal. No major conference took place between leaders in the field
of robust control and renowned macroeconomists like the ones described by
Kendrick (1976) for stochastic optimal control in the early 1970s. Instead of
at most three years for the spread of new tools from the field of control, one
had to wait until the end of the 1990s, more than fifteen years. For now two
decades, Hansen and Sargent (2008, 2011) transplanted robust control tools
into macroeconomics (Hansen and Sargent (2008)), with still a limited number
of followers in macroeconomics:

\begin{quotation}
When we became aware of Whittle's 1990 book, \textquotedblleft\textit{Risk
Sensitive Control\textquotedblright} and later his 1996 book \textquotedblleft%
\emph{Optimal Control: Basics and Beyond\textquotedblright}, we eagerly worked
our ways through them. These and other books on robust control theory like
Basar and Bernhard's $H^{\infty}$\emph{ \textquotedblleft Optimal Control and
Related Minimax Design Problems: A Dynamic Game\textquotedblright\ Approach}
provide tools for rigorously treating the `sloppy' subject of how to make
decisions when one does not fully trust a model and open the possibility of
rigorously analyzing how wise agents should cope with fear of
misspecification.' (Hansen and Sargent (2008), p.~xiii)
\end{quotation}

The explanation of the delay for the transplant of robust optimal control is
that the policy ineffectiveness arguments spread and took over for a while
among influential macroeconomists, according to Kendrick (2005):

\begin{quotation}
However, it was a problem in the minds of many! As a result, work on control
theory models in general and stochastic control models in particular went into
rapid decline and remained that way for a substantial time. In my judgment, it
was a terrible case of `throwing the baby out with the bath water'. The work
on uncertainty (other than additive noise terms) in macroeconomic policy
mostly stopped and then slowly was replaced with methods of solving models
with rational expectations and with game theory approaches.... I believe that
the jury is still out on the strength of these effects and think that there
was a substantial over-reaction by economists when these ideas first became
popular. (Kendrick (2005), p.15)
\end{quotation}

We now re-evaluate the Lucas critique and the time-inconsistency theory that
are at the origin of the long delay for using robust control methods and the
relative scarcity of applied macroeconometric estimations using a Kalman
filter jointly with optimal policy with the linear quadratic Gaussian model.

\section{The Impossibility to Identify the Parameters of the Policy
Transmission Mechanism}

\subsection{Lucas (1976) Critique}

In popular terms, the Lucas critique states that (Keynesian) macroeconomic
relations change with government policy. The Lucas critique is a parameter
identification problem in a system of dynamic equations including reverse
causality due to a feedback control rule equation. The model of the last
section of the paper of the Lucas critique (Lucas, 1976) can be stated in the
form of the first-order single-input single-output model. We add our notations
of the previous sections in brackets in the Lucas (1976) quotation. The
corresponding notations are: for the policy target: $y_{t}=\pi_{t}$, for the
policy instrument: $x_{t}=i_{t}$, and for the parameters of the transmission
mechanism: $\theta=\left(  A,B\right)  $ which also includes additive random
disturbances $\varepsilon_{t}$. In order to avoid confusion with the control
notation $F$ for the policy rule parameter, we change Lucas' notation of
function $F(.)$ into $D(.)$:

\begin{quotation}
I have argued in general and by example that there are compelling empirical
and theoretical reasons for believing that a structure of the form%
\[
y_{t+1}=D\left(  y_{t},x_{t},\theta,\varepsilon_{t}\right)  \text{ [our
notation: }\pi_{t+1}=A\pi_{t}+Bi_{t}+\varepsilon_{t}\text{] }%
\]
$D\left(  .\right)  $ known, $\theta$\ fixed, $x_{t}$ arbitrary will not be of
use for forecasting and policy evaluation in actual economies... One cannot
meaningfully discuss optimal decisions of agents under arbitrary sequences
$\left\{  x_{t}\right\}  $ of future shocks. As an alternative
characterization, then, let policies and other disturbances be viewed as
stochastically disturbed functions of the state of the system, or
(parametrically)%
\[
x_{t}=G\left(  y_{t},\lambda,\eta_{t}\right)  \text{ [our notation: }%
i_{t}=F\pi_{t}+\eta_{t}\text{]}%
\]
where $G$ is known, $\lambda$ is a fixed parameter vector, and $\eta_{t}$ a
vector of disturbances. Then the remainder of the economy follows%
\[
y_{t+1}=H\left(  y_{t},x_{t},\theta\left(  \lambda\right)  ,\varepsilon
_{t}\right)  \text{ [or }\pi_{t+1}=\left(  A+BF\right)  \pi_{t}+\varepsilon
_{t}+B\eta_{t}\text{]}%
\]
where, as indicated, the behavioral parameters $\theta$ [or $A+BF$] vary
systematically with the parameters $\lambda$ [or $F$] governing policy and
other \textquotedblleft shocks\textquotedblright. \textbf{The econometric
problem in this context is that of estimating the function} $\theta\left(
\lambda\right)  $ [or $A+BF$]. In a model of this sort, a policy is viewed as
a change in the parameters $\lambda$ [or $F$] or in the function generating
the values of policy variables at particular times. A change in policy (in
$\lambda$ [or $F$]) affects the behavior of the system in two ways: first by
altering the time series behavior of $x_{t}$; second by leading to
modification of the behavioral parameters $\theta\left(  \lambda\right)  $ [or
$A+BF$] governing the rest of the system. (Lucas (1976, p.39-40))
\end{quotation}

The closed loop parameter $\left(  A+BF\right)  $ determines the persistence
of policy targets. It depends on the policy rule parameter $F$ as in all
closed loop models including feedback mechanism in the field of control. The
econometric problem is the identification of parameters $A$, $B$ and $F$ in a
system of dynamic equations with reverse causality. In addition, in the scalar
case, the true parameters $B$ and $F$ have opposite signs for the case of a
negative-feedback mechanism ($BF<0$).

Lucas (1976) over-stated that this parameter identification problem cannot
\emph{in principle} be addressed.

\begin{quotation}
The point is rather that this possibility [of feedback rules] cannot
\textit{in principle} be substantiated empirically. (Lucas (1976, p.41))
\end{quotation}

Because the identification of estimates of $A$ and $B$ cannot in principle be
substantiated empirically, then it is better to use rules such as $i_{t}=0$
and $F=0$ instead of \textquotedblleft discretion\textquotedblright\ with a
feedback rule parameter $F\neq0$.

This reverse-causality parameter identification problem is identical to the
one of the private sector's demand and supply price elasticities with respect
to quantities, when there is negative feedback of supply on demand (section
2.1, Smith (1776) and Ezekiel's (1938) cobweb model). The supply of goods
should increase when price increases, whereas the demand of goods should
decrease when price increases. But there is \emph{only one covariance} for
observed prices and quantities with a \emph{single sign}.

According to Koopmans (1950), the identification of $B$ and $F$ finding
opposite signs can \emph{in principle} be substantiated empirically, if one
finds strong and exogenous instrumental variables with distinct identification
restrictions for both the transmission mechanism (demand) and the feedback
equation (supply). Finding these instrumental variables with their
identification restrictions is far from being easy with macroeconomic data.
One answer is instrumental variables allowing identification via
identification restrictions such as exogenous monetary policy shocks,
exogenous inflation shocks or lags of some explanatory variables in vector
autoregressive (VAR) models which include policy-target (inflation) and
policy-instrument (funds rate) equations or lags of explanatory variables.

Although Lucas (1976) does not mention microeconomic foundations, it is often
erroneously claimed that microeconomic foundations are required because of the
Lucas (1976) critique. From private sector's demand and reverse causality of
supply with negative feedback, the private sector's behavior estimates of
elasticities are also subject to the Lucas (1976) critique. If the private
sector's behavior is modeled with microfoundations (section 3.2, intertemporal
optimization using optimal control), the representative household's behavior
includes an optimal policy-rule equation $F^{\prime}$ besides the law of
motion of the private sector's state variables with parameters $(A^{\prime
},B^{\prime})$ as in section 2.3. The Lucas critique applies as well: there is
a system of two equations with reverse causality and opposite signs for these
two parameters: $B^{\prime}F^{\prime}<0$. The private sector persistence
parameter $A^{\prime}+B^{\prime}F^{\prime}$ is not a structural parameter.
Therefore, private sector's micro-foundations are never an \textquotedblleft
answer\textquotedblright\ to the parameter identification problem of the Lucas
(1976) critique. Dynamic models of the private sector alone face also the
Lucas critique, even without policy intervention where $BF=0$ using section 2.3\ notations.

Lucas' rhetorical over-statement was successful, according to Blinder's
interview in July 1982 reported in Klamer (1984):

\begin{quotation}
This is a case where people have latched upon a criticism. All you have to do
in this country (more than in other places) right now is scream mindlessly,
\textquotedblleft Lucas critique!\textquotedblright\ and the conversation
ends. That is a terrible attitude. (Klamer (1984), p.166)
\end{quotation}

Instead of fostering a research agenda on identification issues with reverse
causality in applied econometrics, it convinced a number of macroeconomists
that applied macroeconometrics was to be dismissed and hopeless. Macroeconomic
theorists favored theory with the private sector's microeconomic foundations
using simulations, instead of tackling thorny identification issues in econometrics.

\subsection{Sims (1980) faces the Lucas critique: The VAR\ Price Puzzle for
Identifying $B$ and $F$ with Opposite Signs}

Several Keynesian applied econometricians, including Blinder, investigated the
stability of reduced form estimates $A+BF$ without finding much change on
several key equations from 1960 until 1979 or 1982 (Goutsmedt et al. (2019)).
But it is possible to have unchanged reduced form persistence parameters, even
when the \textquotedblleft structural\textquotedblright\ parameters did change
from period 1\ (before 1973) to period 2 (after 1973): $A_{1}+B_{1}F_{1}%
=A_{2}+B_{2}F_{2}$. The difficult step is to identify separately $A$, $B$ and
$F$ for each period, knowing that there is a reverse causality with opposite
signs ($BF<0$) of the policy instrument and the policy target in the
transmission mechanism (parameter $B$), on the one hand, and in the policy
rule (parameter $F$), on the other hand.

The Lucas critique was used as an explanation for the forecasting errors of
large scale models such as the FMP model, for Federal Reserve-MIT-Penn.
Another critique of large scale forecasting models was put forward by Sims
(1980) with his vector auto-regressive models (VAR). Sims (1980) warned that
some identification restrictions (for example, constraining some parameters to
be zero) were not tested in macroeconometrics in the 1970s.

But soon an unexpected anomaly of VAR\ models appeared, labeled the
\textquotedblleft price puzzle\textquotedblright\ (Rusn\'{a}k et al. (2013)).
We highlight that the VAR\ price puzzle is closely related to the Lucas (1976)
critique. Impulse response functions obtain that inflation increases,
following a shock increasing the funds rate. This is the opposite estimate of
the expected sign of the true model: $\widehat{B}>0>B$. Indeed, impulse
response functions are such that the funds rate increases, following a shock
increasing inflation, with the expected sign: $\widehat{F}>0$. The identical
sign for $\widehat{B}$ and $\widehat{F}$ comes from the fact that the sign of
the covariance between the policy instrument and the policy target does not
change with one or two lags. The price puzzle is such that $\widehat{B}>0$ and
$\widehat{F}>0$, so that there seems to be a positive feedback mechanism on
inflation persistence of policy $\widehat{A}+\widehat{B}\widehat{F}>A$, even
during Volcker's disinflationary policy with negative-feedback mechanism.

Estimating the correct sign of $B$ is of utmost importance for policy advice.
Negative feedback reduces the persistence of the policy target according to
the condition: $0<A+BF<A$. This implies $BF<0$. The sign of $F$\ should be the
opposite of the sign of $B$. If one estimates wrongly that $\widehat{B}>0$
whereas its true value is $B<0$, the policy advice will be $F<0$, so that
$\widehat{B}F<0<BF$. Therefore, the policy advice will wrongly increase the
persistence of the policy target:
\begin{equation}
0<A+\widehat{B}F<A<A+BF\text{ if wrong estimated sign }B<0<\widehat{B}.
\end{equation}

Several econometric techniques have been used to reverse the sign of the
\textquotedblleft price puzzle\textquotedblright\ (Rusn\'{a}k et al. (2013)).
Perhaps some of them correctly managed to answer the Lucas (1976) critique for
some data set. Walsh's (2017) textbook restricts the description of the price
puzzle measurement issue to half a page. Walsh's (2017) textbook demonstrates
that it has been easier to develop\ a large variety of theoretical
micro-founded macroeconomic models conflicting among themselves than to
provide accurate measurements of $A$, $B$ and $F$.

\subsection{Kydland and Prescott (1982) face the Lucas' critique}

For a representative agent of the private sector, Ramsey (1928) models a
negative-feedback mechanism for optimal savings or optimal consumption $x_{t}$
with consumption smoothing (related to the notations $R/Q$ for the linear
quadratic approximation). This idea was formalized again independently by
Cass, Koopmans and Malinvaud in 1965\ (Spear and Young (2014)). Kydland and
Prescott (1982) stuck to this idea adding autocorrelated productivity shocks
$z_{t}$.

Their model is a stock-flow reservoir model where the stock of wealth or the
stock of capital $k_{t}$ has an optimal set point $k^{\ast}$ (De Rosnay
(1979), Meadows (2008)). The stock of capital is controlled by
saving/investment inflows and it faces depreciation outflows. If the level of
the stock is below its long run optimal target, consumption decreases
proportionally according to a negative feedback proportional rule:
\[
x_{t}=F^{\prime}k_{t}+F_{z}^{^{\prime}}z_{t}\text{, }k_{t+1}=\left(
A^{\prime}+B^{\prime}F^{\prime}\right)  k_{t}+\left(  A_{kz}^{\prime
}+B^{\prime}F_{z}^{\prime}\right)  z_{t}\text{ with }0<A=A^{\prime}+B^{\prime
}F^{\prime}<1\text{ ,}%
\]

where both the consumption flow and the capital stock are written in deviation
from their long run target. Consumption also responds to an exogenous
auto-correlated productivity shock $z_{t}$. To become rich, a poor
representative households with capital below its long run target ($k_{t}<0$)
\emph{only needs to save more} which implies his optimal consumption is below
its long run target ($x_{t}<0$). Hence, the poor representative household
replenishes his reservoir storing his stock of wealth to its optimal level
($k_{t}=0$).

When the relative cost of changing consumption ($R/Q$) increases, the policy
rule parameter decreases (consumption is less volatile) and the persistence of
capital or wealth increases $A$ or equivalently, the speed of convergence of
capital towards equilibrium decreases.

Kydland and Prescott (1982) send us back to Smith's (1776) view that a
negative-feedback mechanism always works within the private sector with supply
and demand interaction. This time, the negative-feedback mechanism is related
to the investment/saving market. It is a typical model including microeconomic
foundations which faces the Lucas critique\footnote{Sergi (2018) provides the
history of the Lucas (1976) critique in relation to real business cycles
models and dynamic stochastic general equilibrium models which followed
Kydland and Prescott (1982) paper.}:

(1) There exists a problem of parameter identification due to a private sector
reverse causality feedback rule. In particular, it is difficult to find
properly identified opposite signs for optimal behavior of the private sector
with negative feedback such that: $B^{\prime}F^{\prime}<0$.

(2) They assume and do not test the identification restrictions such that
$BF=0$. They assume that there is no transmission mechanism of macroeconomic
policy ($B=0$) and no policy maker's feedback rule ($F=0$). This amounts to
consider that the autocorrelation of the policy target $A+BF$ is exogenous.
But, if one does not test whether $BF=0$ or not, one cannot exclude $BF\neq0$
so that the persistence of the policy target is a reduced form parameter which
depends on the policy rule parameter $F$ according to $A+BF$. This is exactly
the Lucas (1976) critique, as highlighted by Ingram and Leeper (1990):

\begin{quotation}
Kydland and Prescott's model assumes that policy doesn't affect private
decision rules. There is no policy evaluation to perform. Alternatively, if
policy does affect private behavior, then the parameters Kydland and Prescott
calibrate are reduced-form parameters for some underlying model embedding
monetary and fiscal policy. Thus, if there is any policy evaluation left to
perform, Kydland and Prescott's calibrated parameters must be functions of
policy behavior and should change systematically with policy. (Ingram and
Leeper (1990), p.~3)
\end{quotation}

Kydland and Prescott (1982) have chosen simulations and calibrations in order
to avoid the econometric measurement of $A^{\prime}$, $B^{\prime}$,
$F^{\prime}$, $B$, and $F$. To solve and simulate their model, Kydland and
Prescott (1982) used Kalman (1960a) LQR\ solutions of a Riccati equation.
Without Kalman's (1960a) solution of optimal control for the private sector,
real business cycles would not have been computed. Lucas selected a Kalman
filter for his theoretical papers, but rejected optimal control for policy
makers doing \textquotedblleft discretion\textquotedblright, Lucas (1976).
Kydland and Prescott (1982) selected optimal control for the private sector
only, but rejected Kalman filter estimations. Because Kydland and Prescott
(1982) expected their model to be rejected by econometrics, they decided to
avoid a measurement strategy based on a Kalman filter, as opposed to
researchers in the field of control. Behind this seemingly inconsistent use of
tools of control by Lucas and Kydland and Prescott, there remains a common
motive: support \textquotedblleft rules\textquotedblright\ in the rules
\emph{versus} discretion controversy. Write complicated and technical papers,
imported from the tools of control, with a crucial hypothesis (here $BF=0$)
which handicaps policy maker's negative-feedback discretion with respect to rules.

\begin{quotation}
Only when we have considerable confidence in a theory of business cycle
fluctuations would the application of public finance theory to the question of
stabilization be warranted. Such an extension is straightforward in theory,
though in all likelihood carrying it out will be difficult and will require
ingenuity. (Kydland and Prescott (1988), p.~358)
\end{quotation}

In relation to Kydland and Prescott's (1982) restriction $BF=0$ of no
stabilization policy during post-war business cycles, Blinder states in an
interview in July 1982:

\begin{quotation}
One important aspect to the debate is how stable an economy would be without
an active government... Look at a time series chart (in my stagflation book)
of the year-to-year fluctuations of GNP. They are remarkably smaller in the
'50s and '60s, the \textquotedblleft Keynesian\textquotedblright\ years, than
they were before... I claim I know why: that there was active fiscal and
monetary policy to iron business cycles beginning after World War II, but not
before. So I think discretionary policies have been used and they have often
worked. (Klamer (1984), p.~165)
\end{quotation}

Lucas (1987) confirms Blinder's observations for the case of consumption time
series, which has in general lower volatility then GNP:

\begin{quotation}
In the period prior to the Second World War, and extending as far back in time
that we have usable data, the standard deviation (logarithmic deviations from
trend) of consumption was about three times its post-war level. (Lucas (1987), p.~28)
\end{quotation}

Although Simons' (1936) and Friedman's (1948) 100\% reserve requirement for
banks was not enacted, the '50s and '60s were decades without banking crisis
following the Glass-Steagall act of 1933, separation of investment and retail
banking and the Bretton Woods conference, 1944, with capital controls and
international financial stability. Financial regulation and high productivity
growth are likely to have been complementary to active stabilization policy in
order to damp business cycles fluctuations during these two decades.

\subsection{Lucas (1987) faces the Lucas critique}

Lucas (1987, chapter\ 3 and 2003) evaluates the cost of aggregate US
consumption fluctuations by its certainty equivalent loss of consumption. He
uses the second order Taylor development of a discounted separable utility
function with constant relative risk aversion of the representative household.
This certainty equivalent is one half of the coefficient of absolute
fluctuations aversion (or relative risk aversion) $\gamma$ times the variance
of the log of U.S. aggregate consumption around its trend denoted $\sigma
_{x}^{2}$. Lucas (1987) considers the following range of estimates
$\widehat{\gamma}$ for the coefficient of constraint relative risk aversion:%

\[
\frac{1}{2}\gamma\sigma_{x}^{2}\text{ with }1\leq\widehat{\gamma}<20\text{ and
}\widehat{\sigma}_{x}= 0.013.
\]

Lucas (1987) is the transcript of Yrj\"{o} Jahnsson lecture in May 1985 and
Lucas (2003) is the transcript of January 2003 presidential address at the
American Economic Association, eighteen years after for the same evaluation.
Lucas (1987) evaluates the certainty equivalent cost of post Second World War
fluctuations of consumption to $\$8.50$ year 1983 per person ($\$22$ year
2019). This is an extremely low cost. But it is also a conservative estimate
of welfare because Lucas' (1987) utility does not include the leisure benefits
during recessions due to reduction of working hours which partially offsets
the loss of consumption in Kydland and Prescott's (1982) utility. Lucas (1987,
p.28) mentions \textquotedblleft I would guess that taking leisure
fluctuations into account more carefully would reduce the estimate in the text
still further.\textquotedblright

Lucas' (1987) low estimate of the variance of detrended consumption
$\sigma_{x}^{2}$ since the Second World War faces the Lucas critique for the
following reason: It is a reduced-form estimate $\sigma_{x}^{2}$\emph{ which
depends on stabilization policy parameters }$F$. For example, if the log of
U.S. aggregate consumption around its trend is a stationary autoregressive
process of order one, its variance decreases if there is negative feedback due
to stabilization policy.%

\begin{align*}
x_{t}  &  =\left(  A+BF\right)  x_{t-1}+\varepsilon_{t}\text{ and }%
0<A+BF<\min(A,1)\\
\sigma_{x}^{2}  &  =\frac{\sigma_{\varepsilon}^{2}}{1-\left(  A+BF\right)
^{2}}<\frac{\sigma_{\varepsilon}^{2}}{1-A^{2}}\text{ if }0<A<1\text{ and
}BF<0.
\end{align*}

Lucas (1987) notes the difference of the volatility of consumption before and
after Second World War:

\begin{quotation}
Fluctuations in the pre-Second World War, especially combined as they were
with an absence of adequate programs for social insurance, were associated
with large cost of welfare... In the period prior to the Second World War, and
extending as far back in time that we have usable data, the standard deviation
(logarithmic deviations from trend) of consumption was about three times its
post-war level. Since this number is squared in the formula (8), the implied
cost estimates are multiplied by nine, becoming like one-half of 1\ per cent
of total consumption. As deadweight losses go, \emph{this is} a large number.
(Lucas (1987), p.28)
\end{quotation}

But he does not attribute the change of the volatility of consumption
$\sigma_{x}^{2}$ to the changes of the policy-rule parameter $F$ related to
the introduction of Keynesian stabilization policy. His conclusion therefore
may face the Lucas (1976) critique:

\begin{quotation}
I find the exercise instructive, for it indicates that economic instability at
the level we have experienced since the Second World War is a minor problem,
even relative to historically experienced inflation and certainly relative to
the costs of modestly reduced rates of economic growth (Lucas (1987), p.~31).
\end{quotation}

Blinder's argument (interview in July 1982 in Klamer (1984)) is to compare the
US\ economy after the Second World War with stabilization policy ($BF<0$) to
the US\ economy before this period without stabilization policy (with e.g.
$BF=0$). One would have saved up to $8$ times the cost of consumption
fluctuations using Lucas (1987) utility and estimates: $\$68$ year
1983\ ($\$175$\ year 2019)\ per person, still a small benefit. Then, Blinder
(1987) changes the scale of Lucas evaluation of the cost of post-war-II
US\ business cycles as follows:

\begin{quotation}
Now change the utility to the Stone-Geary form: $U=log(C-\$1500)$. Here, a
4\ percent drop in consumption reduces utility by 8.3 percent... Finally, let
the cycle instead reduce the consumption of 10 percent of the population by 40
percent while the other 90 percent loses nothing. (Note I am allowing very
generous unemployment insurence here.) With the Stone-Geary utility function,
mean utility declines 16.1\ percent. (Klamer (1984), p.~165)
\end{quotation}

Since then, the great financial crisis in 2007\ and its jobless recovery
changed the figures of the standard error of US\ detrended aggregate
consumption since the Second World War. Twenty-two years after Lucas Yrj\"{o}
Jahnsson lecture in 1985, Barro (2009) uses Epstein-Zin-Weill utility for a
representative agent and estimates that economic disasters with at least 15\%
loss of GDP\ over several consecutive years has a frequency of 2\% for
35\ countries during the 20th century:

\begin{quotation}
Society would willingly reduce GDP by around 20 percent each year to eliminate
rare disasters. The welfare cost from usual economic fluctuations is much
smaller, though still important, corresponding to lowering GDP by about 1.5
percent each year. (Barro (2009))
\end{quotation}

\subsection{Kydland and Prescott (1977, section 5): Systematic Bias of
Estimated Parameters}

Following the Lucas (1976) critique, because the identification of estimates
of $A$ and $B$ cannot in principle be substantiated empirically, then it is
better to use \textquotedblleft rules\textquotedblright\ where $i_{t}=0$ and
$F=0$ instead of \textquotedblleft discretion\textquotedblright\ with a
feedback rule parameter $F\neq0$. Accordingly, Kydland and Prescott (1977) set
a new definition of discretion where the policy maker always measures the
parameters of the transmission mechanism with cumulated systematic errors
which finally leads to the instability of the policy target.

\begin{definition}
A\ discretionary policy is a policy for which the policy rule is optimal under
the incorrect assumption that the observed persistence of the policy target is
invariant to the policy-rule parameter used in the previous period.
\end{definition}

\begin{quotation}
The policymaker uses control theory to determine which policy rule is optimal
\textbf{under the incorrect assumption} that the equilibrium investment
function is invariant to the policy rule used... Econometricians revise their
estimate of the investment function, arguing that there has been structural
change, and the policymaker uses optimal control to determine a new policy
rule. (Kydland and Prescott (1977), section 5).
\end{quotation}

To describe the private sector, a model with convex adjustment costs of
investment is used. Because of the optimal behavior of the private sector,
\textquotedblleft rules\textquotedblright\ pegging the policy instrument to
its long-run value (where $i_{t}=0$ and $F=0$) has stable dynamics
($0<A=A^{\prime}+B^{\prime}F^{\prime}<1$). For optimal taxation, the policy
maker's instrument is investment tax credit. Because $A\neq0$ and $Q\neq0$,
\textquotedblleft rules\textquotedblright\ are sub-optimal with respect to
optimal control discretion by the policy maker obtaining unbiased estimates of
the policy rule parameter $A$ (section 3.2). Assuming an increasing bias of
the policy-rule parameter $A$, Kydland and Prescott's (1977) simulations lead
to the increased volatility of the policy target in the discretion case when
compared to the rules case:

\begin{quotation}
For one example (...), after the third iteration, however, performance
deteriorated, and the consistent [optimal control] policy to which the process
converged was decidedly inferior to the passive policy for which the
investment tax credit was not varied... For another example, the iterative
process did not converge. Changes in the policy rule induced ever larger
changes in the investment function. The variables fluctuated about their
targeted values but fluctuated with increased amplitude with each
iteration.... Such behavior either results in consistent but suboptimal
planning or in economic instability. (Kydland and Prescott (1977), p.~485)
\end{quotation}

We translate their argument using the first order single-input single-output
model. In each period, the policy maker uses $A+BF_{t-1}$ instead of $A$ for
determining his policy rule parameter $F_{t}$ at the current period. The
policy makers sets an optimal policy $F_{t}$ which belongs to the stability
set for the erroneous $A^{\prime}=A+BF_{t-1}$. This implies that the sequence
of policy-rule parameters $F_{t}$ is strictly increasing or decreasing:%

\[
\left\vert \left(  A+BF_{t-1}\right)  +BF_{t}\right\vert <A<1\text{
}\Rightarrow\left\vert B\right\vert \left\vert F_{t-1}-F_{t}\right\vert <0.
\]

After some iterations up to a date $t_{1}$, the \textquotedblleft
discretionary\textquotedblright\ policy rule $F_{t}$ may belong to a set with
a larger persistence $\left\vert A+BF_{t}\right\vert $ and volatility of the
policy target than the one obtained with \textquotedblleft
rules\textquotedblright\ (where $i_{t}=0$ and $F=0$) persistence equal to
$A$:
\[
\left\vert A+BF_{t}\right\vert >A\text{, for }t>t_{1}.
\]

Because the policy maker estimates with an increasing bias the open-loop
persistence $A$\ of the policy target, discretion implies higher volatility of
the policy target than rules. This result does not assume rational
expectations with forward-looking variables, nor time-inconsistency in a
Stackelberg dynamic game, which is defined in the next section.

\section{The Impossibility of Policy Maker's Credibility and therefore of
Optimal Control.}

\subsection{Simaan and Cruz' (1973b) Time-Inconsistency of Stackelberg Dynamic
Games}

Kydland and Prescott's (1977) expressed their dislike of discretionary policy
in the abstract of their famous paper on rules \textit{versus} discretion:

\begin{quotation}
... economic planning is not a game against nature but, rather, a game against
rational economic agents. We conclude that there is \emph{no} way control
theory can be made applicable to economic planning when expectations are
rational. (Kydland and Prescott (1977), p.~473)
\end{quotation}

This radical statement put forward the specificity and the autonomy of
macroeconomists with respect to the field of control. Following this statement
of such a divide as a major scientific contribution, surveys and
historiography of the time-inconsistency literature done by economists do not
mention Simaan and Cruz's (1973b) original contribution, for example Hartley
(2006). Even Tabellini (2005) mentions Simaan and Cruz only in a footnote.

The point-of-view from the field of control is markedly different. The field
of control was subsidized by the army in the 1940s and during the cold war.
War is related to non-cooperative dynamic games, so army funding was not
limited to games against nature. Dynamic Nash games (Isaacs (1965)) used tools
related to the field of control. Dynamic games flourished in engineering and
applied mathematics departments using control tools. It was marginal in
economics departments. Cruz, a professor of electrical and computer
engineering at university of Illinois, published with his student, Simaan, an
important result on Stackelberg dynamic games (Simaan and Cruz (1973b)). The
paper was partly funded by the US Air Force. Cruz (2019) describes the
following sequence, from the point of view of the field of control:

\begin{quotation}
R.P. Isaacs (Isaacs 1965) is the originator of differential games. Y.C. Ho (Ho
1970) clarified the connection between control and differential game theory.
C.I. Chen and J.B. Cruz (Chen and Cruz 1972) were the first to consider
dynamic Stackelberg games [with an economic example]. M.A. Simaan and J.B.
Cruz (Simaan and Cruz 1973a, Simaan and Cruz 1973b) reframed dynamic
Stackelberg game theory, providing mathematical proofs for various results and
showed that the Stackelberg strategies do not necessarily satisfy Bellman's
principle of optimality. This violation of the principle of optimality was
renamed as time inconsistency in economic policy (Kydland and Prescott 1977),
and later Kydland and Prescott won the Nobel prize in economics in 2004 in
part on the basis of their paper. M.A. Simaan and J.B. Cruz (Simaan and Cruz
1973c) extended the dynamic Stackelberg concept to many players. J.B. Cruz
(Cruz 1975) introduced the dynamic Stackelberg concept to the economics
community, and he extended the Stackelberg strategy to more than two levels
(Cruz 1978). (Cruz (2019), p.83)
\end{quotation}

Simaan and Cruz (1973b) is one of the key building blocks for Kydland and
Prescott's (1977) later success. Theoretical work done by two electrical
engineers was the initial trigger of Kydland and Prescott's (1977) and
Prescott's (1977) papers denying the use of control for stabilization policy.

Stackelberg dynamic games can consider a quadratic loss function for the
leader and the follower. In this case, their solutions are extensions of
Kalman's linear quadratic regulator solutions, solving an algebraic Riccati equations.

The leader takes into account the follower's marginal conditions. However,
like the tower of control interacting with the pilot of an airplane, the
leader can order the change of the decision variables of the follower (the
pilot in the cockpit). These decision variables are jump variables. Hence, the
policy maker can anchor the initial value of these jump variables optimally,
minimizing the loss function with respect to the follower's decision variable
at the initial date.

In our example, let us assume that prices and inflation are private sector's
\textquotedblleft jump\textquotedblright\ decision variable, with unknown
initial condition, instead of assuming a given initial value $\pi_{0}$. This
initial transversality condition is optimally decided by the policy maker,
where the anchor parameter $P^{\ast}\left(  Q/R\right)  $ of the jump variable
on predetermined variables $z_{0}$ is derived from optimization. It depends on
the policy maker's preferences:%

\begin{equation}
\gamma_{0}^{\ast0}=\left(  \frac{\partial L}{\partial\pi}\right)
_{0}=0\Rightarrow\pi_{0}^{\ast0}=P^{\ast}(Q/R)z_{0}\text{ and }\gamma
_{t}^{\ast0}>0=\gamma_{0}^{\ast0}\text{ for }t=1,..
\end{equation}

The time superscript besides a star $\ast0$ indicates it is a date-$0$ optimal
plan. This plan is valid for a list of future dates starting at date $t=0$
which corresponds to the time-subscript. Simaan and Cruz's (1973b) result is
that if the policy maker re-optimizes at any future period indexed by $t$, it
contradicts the optimal plan decided at date zero: $\gamma_{t}^{\ast
t}=0<\gamma_{t}^{\ast0}$. They conclude:

\begin{quotation}
If the starting time is fixed, the leader's closed-loop Stackelberg control is
the best control law (among all other admissible closed-loop controls) that he
can announce prior to the start of the game, but it does not have this same
desirable property from any other starting time (Simaan and Cruz's (1973b), p.~625).
\end{quotation}

In a personal communication, Simaan recalls the genesis of this result:

\begin{quotation}
In 1972 while I was a PhD student exploring the application of the Stackelberg
Strategy to dynamic games, I had extensive discussions with my advisor
Professor Cruz about my observation and conclusion, at the time, that the
closed-loop controls in Stackelberg dynamic games did not satisfy Bellman's
Principle of Optimality and that the controls had to be recalculated again at
every instant of time along the optimal trajectory. Initially, he had some
doubt about the validity of this result, but after long discussions, we were
convinced. The results were published in a follow up paper on Stackelberg
dynamic games in 1973. Unfortunately, at that time, we did not make the
connection with macroeconomics, but fortunately, several years later, Kydland
and Prescott made the connection and published their famous 1977 paper.

Coincidentally, in 1977 I was a faculty member in electrical engineering at
the University of Pittsburgh, which is right next door to Carnegie Mellon
University. I had no idea that Kydland and Prescott were at CMU and I am not
sure if they were aware that I was at the University of Pittsburgh about five
minutes' walk from their offices. I discovered their work on time
inconsistency several years later and was very happy that they received the
Nobel Prize for it in 2004. (email received 14th July 2020.)
\end{quotation}

The transplant of Simaan and Cruz' (1973b) time-inconsistency of Stackelberg
dynamic games to economics was quickly done by Kydland (1975, 1977),
Prescott's Ph.D.\ student at Carnegie Mellon University, where he completed
his PhD in 1973 on "Decentralized Economic Planning". Cass first encouraged
Kydland to expand an idea about non-cooperative and dynamic games into a
thesis (Kydland (2014)). In a paper based on a chapter of his
PhD\ dissertation, Kydland (1975) acknowledges his marginal contribution with
respect to Simaan and Cruz (1973a, b):

\begin{quotation}
From the viewpoint of game theory this section does not really offer any new
results... The dominant player problem, on the other hand, has only recently
received a little attention in the game literature, and the two interesting
papers by Simaan and Cruz [1973a, 1973b] should be mentioned (Kydland (1975), p.~323).
\end{quotation}

After Kydland's (1975, 1977) citations, Kydland and Prescott (1980 p.80 and
86) and Calvo (1978, equation A.10, p.1425 and A.21 p.1427) use the initial
transversality condition at the origin of time inconsistency without citing
Simaan and Cruz (1983b, equation 11, p.621). Calvo (1978) correctly states that:

\begin{quotation}
We encounter time inconsistency even when the government attempts to maximize
the welfare of the representative individual, that is to say, in a context
where there is not a shade of malevolence or dishonesty at play. (Calvo
(1978), p.~1422)
\end{quotation}

In Simaan and Cruz (1973b), Calvo (1978) and Kydland and Prescott (1980),
there is time inconsistency and the policy maker knows the transmission
mechanism parameters which do not change over time. Therefore, he has unbiased
estimates of these parameters, as opposed to the policy maker in Kydland and
Prescott (1977), section 5, which is not related to dynamic time-inconsistency.

Although the optimal discretionary policy at date $0$ is sub-optimal with
respect to all other optimal discretionary policies with an optimization
starting at any future date, it is always less sub-optimal than rules which
peg policy instrument to their long run value, for dynamic models if $A>0$.

Kydland and Prescott (1980) rename the Stackelberg dynamic game as Ramsey
optimal policy, in reference with Ramsey's (1927) static model of optimal taxation.

\begin{quotation}
In his paper on optimal taxation, Ramsey (1927, p. 59) briefly considers the
dynamic problem, but because it is `considerably more difficult' essentially
assumes the dynamics away. (Kydland and Prescott (1980), p.84)
\end{quotation}

Kendrick (2005) mentions Kydland's talk in 1975:

\begin{quotation}
Since the idea of forward variables like those above was of obvious importance
in dynamics and control, we invited Lucas to give a talk at the Society of
Economic Dynamics and Control (SEDC) conference in Cambridge, MA in 1975. He
declined to come, but Finn Kydland did accept our invitation. Finn's talk at
that meeting was well attended and listened to carefully. The reaction of the
control engineer standing next to me was typical -- \textquotedblleft no
problem -- you just have to treat it like a game theory
problem\textquotedblright. (Kendrick (2005), p.~15)
\end{quotation}

Ljungqvist and Sargent (2012) judge that Prescott's (1977) \textquotedblleft
pessimism\textquotedblright\ on optimal control has been overturned:

\begin{quotation}
Prescott (1977) asserted that recursive optimal control theory does not apply
to problems with this structure. This chapter and chapters 20 and 23 show how
Prescott's pessimism about the inapplicability of optimal control theory has
been overturned by more recent work. The important contribution by Kydland and
Prescott (1980) helped to dissipate Prescott's initial pessimism. (Ljungqvist
and Sargent (2012), chapter 19)
\end{quotation}

Ljungqvist and Sargent (2012) expose the optimal program in the linear
quadratic case of a Stackelberg dynamic game. This is the same optimal program
as the one presented in Simaan and Cruz (1973b), Kydland (1975, 1977), Calvo
(1978) and Kydland and Prescott (1980). \textquotedblleft More recent
work\textquotedblright\ only assumes a policy maker's commitment constraint as
initially proposed by Calvo (1978), published the year after Prescott (1977):

\begin{quotation}
It is clear that no inconsistency arises if the government optimizes at
$t_{0}$, say, and abides by the dictates of that policy for all $t\geq t_{0}$;
so one possible proposal could be constraining the government to do just that
for a given $t_{0}$. (Calvo (1978), p.~1422)
\end{quotation}

The contribution of \textquotedblleft more recent work\textquotedblright\ was
to \emph{forget} Prescott's (1977) and Kydland and Prescott's (1977)
\emph{over-statement} of the inapplicability of optimal control for
stabilization policy.

Finally, how large are the relative gains of re-optimization on future dates
$t>t_{0}$ with respect to initial optimal policy on date $t_{0}$? Firstly, if
the initial shock is small and the policy target remains close to its long run
equilibrium, the optimal path hardly deviates after re-optimization. Secondly,
if the initial shock is large and if one re-optimizes soon after, the optimal
path hardly deviates. Thirdly, if the relative cost of changing the policy
instrument ($R/Q$) is very large or very small, the optimal path hardly
deviates after re-optimization (Chatelain and Ralf (2016)).

If there is a fixed cost of future re-optimization, re-optimization is more
likely to happen (1) following large shocks $\pi_{0}$ although one may
maintain the linear-quadratic approximation (2) in the middle of the duration
of the way back to equilibrium and (3) if $R/Q\approx1$: the cost of changing
the policy instrument is of the same order of magnitude than the cost of the
volatility of the policy target.

\subsection{Barro and Gordon (1983) Steady State Bias in the Policy Maker's
Loss Function}

As opposed to Calvo's (1978) time-inconsistency in dynamic Stackelberg games,
Kydland and Prescott (1977, section 3) and Barro and Gordon (1983a, 1983b)
assume that the policy maker's utility is different from the private sector's
utility (there is a "\emph{shade of malevolence or dishonesty at play}" by the
policy maker) and that the policy transmission mechanism is a static model
($A=0$).

\begin{definition}
Optimal discretion is such that, firstly, one assumes $A=0$ in the policy
transmission mechanism, secondly the policy maker has a long run target value
of the policy target ($\pi_{D}>0$) which is \textbf{systematically different}
from the long-run set point value of the policy target in the transmission
mechanism, contrary to \textquotedblleft rules\textquotedblright.
\end{definition}

The policy target is unemployment $U_{t}$ (Barro and Gordon (1983a)) or output
$x_{t}$ (Barro and Gordon (1983b)) and their policy instrument is inflation.
The transmission parameter is a static Phillips curve: $U_{t}=B\pi_{t}$ with
$B_{U}<0$ or $x_{t}=B_{x}\pi_{t}$ with $B_{x}>0$. \ We translate their model
in our benchmark model with our notations: $\pi_{t}=Bi_{t-1}$ with $B<0$ and
$A=0$.

\begin{proposition}
The joint assumptions $A=0$ (static model), a non-zero weight on the
volatility of th policy target $Q>0$ and a distorted long run objective for
the policy target $\pi_{D}>0$ (inflationary bias) implies that optimal
discretion is necessarily sub-optimal with respect to rules which are optimal
assuming $A=0$ and $\pi_{D}=0$
\end{proposition}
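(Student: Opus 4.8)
The plan is to work entirely within the paper's benchmark first-order two-inputs single-output model with the static specialization $A=0$, and to compare the two optimization problems explicitly. First I would set up the policy maker's problem under ``discretion'' with a distorted target: the loss is $\tfrac{1}{2}\sum_t \beta^t\big(Q(\pi_t-\pi_D)^2 + R i_t^2\big)$ with $\pi_D>0$, subject to $\pi_t = B i_{t-1}$, $B<0$. Because $A=0$, as already established in the proof of the earlier Proposition (part (ii)), the dynamics carry no state across periods, so the infinite-horizon problem decouples into a sequence of identical static problems: minimize $Q(\pi-\pi_D)^2 + R i^2$ subject to $\pi = Bi$. Solving this one-period problem gives the discretionary outcome $\pi^{disc} = \dfrac{Q B^2}{Q B^2 + R}\,\pi_D > 0$ and a strictly positive loss per period.

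Second, I would compute the benchmark ``rules'' outcome: pegging $i_t=0$ (hence $F=0$) gives $\pi_t=0$ for all $t\ge 1$, and this is exactly the optimal solution of the \emph{undistorted} problem with $\pi_D=0$ (again by part (ii) of the earlier Proposition, which says $F=0$ is optimal for $A=0$, $Q\ge 0$). So under ``rules'' the policy target sits permanently at its set point $0$. Third, I would compare the two: the discretionary policy produces a persistent positive inflation $\pi^{disc}>0$ at every date with no compensating reduction in the targeted variable's deviation from its \emph{true} set point, so its per-period loss $Q(\pi^{disc}-\pi_D)^2 + R(i^{disc})^2 = \dfrac{QR}{QB^2+R}\,\pi_D^2 > 0$ strictly exceeds the loss under rules (which is the same functional evaluated at the undistorted objective, where the rule attains the minimum). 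Summing the strictly positive gap over $t$ with discount factor $0<\beta\le 1$ gives a strictly larger total loss for discretion, establishing sub-optimality.

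The point I would emphasize — and the one place where some care is needed — is \emph{why} the distortion $\pi_D>0$ cannot be undone by the policy maker's optimization, i.e.~why discretion is trapped at $\pi^{disc}>0$ rather than choosing $i$ to hit $\pi=0$. The answer is the structure of discretionary (period-by-period, Markov-perfect) optimization: the policy maker takes the current deviation as given and trades off $Q$ against $R$ at the margin, and with a distorted target the first-order condition $QB(\pi-\pi_D) + R i = 0$ combined with $\pi = Bi$ forces $\pi>0$ whenever $\pi_D>0$ and $Q>0$. This is precisely the inflationary-bias mechanism of Barro and Gordon (1983a,b), and the mild obstacle is to state it cleanly as the \emph{definition} of discretion adopted in the paper (the immediately preceding Definition) rather than re-deriving a time-consistency argument; once that is granted, everything else is the elementary quadratic algebra above. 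I would also note the degenerate boundary cases for completeness: if $Q=0$ there is no trade-off and $i=0$ is optimal (no bias), and if $\pi_D=0$ discretion coincides with rules — so the strict inequality genuinely requires all three hypotheses $A=0$, $Q>0$, $\pi_D>0$, which is exactly what the Proposition asserts.
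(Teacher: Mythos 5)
Your proof is correct and follows essentially the same route as the paper's: a period-by-period first-order condition under the static transmission $\pi_t = Bi_{t-1}$, yielding an explicit distorted equilibrium $\pi_D^{\ast}>0$, $i_D^{\ast}\neq 0$, which is then compared against the rules outcome $\pi^{\ast}=i^{\ast}=0$ that is optimal when $\pi_D=0$. The only discrepancy is that the paper's per-decision grouping carries the discount factor $\beta$ on the $Q$-term (the target is realized one period after the instrument is chosen), so your coefficient $\frac{QB^{2}}{QB^{2}+R}$ should read $\frac{\beta QB^{2}}{\beta QB^{2}+R}$; this does not affect any sign or the conclusion, and your added explicit loss comparison and boundary-case remarks ($Q=0$ or $\pi_D=0$) only make explicit what the paper leaves implicit.
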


\begin{proof}
Optimal discretionary policy is obtained by minimizing in each period the
expected loss fonction. The first order condition is:%
\begin{equation}
\frac{\partial}{\partial\pi_{t}}\left(  \beta Q\left(  \pi_{t}-\pi_{D}\right)
^{2}+R\left(  \frac{\pi_{t}}{B}\right)  ^{2}\right)  =2\beta Q\left(  \pi
_{t}-\pi_{D}\right)  +2\frac{R}{B^{2}}\pi_{t}=0.\nonumber
\end{equation}
As expected, this leads to a suboptimal solution denoted ($\pi_{D}^{\ast}$,
$i_{D}^{\ast}$) with a policy maker's distorted steady state $\pi_{D}>0$ when
compared to the solution of \textquotedblleft rules\textquotedblright%
\ ($\pi_{D}^{\ast}=0$, $i_{D}^{\ast}=0$) which is optimal for a static model
($A=0$) with a policy maker's non-distorted steady state $\pi_{D}=0$:%
\[
\pi_{D}^{\ast}=\frac{\beta Q}{\beta Q+\frac{R}{B^{2}}}\pi_{D}>0=\pi^{\ast
}\text{ and }i_{D}^{\ast}=\frac{1}{B}\frac{\beta Q}{\beta Q+\frac{R}{B^{2}}%
}\pi_{D}<0=i^{\ast}.
\]

\end{proof}

The discretionary equilibrium of the policy target $\pi_{D}^{\ast}$ increases
with the steady state inflationary bias $\pi_{D}$ in the central bank's
preference and with the weight of the policy target $Q$ and decreases with the
weight on the policy instrument $R$. The same level of $\pi_{D}^{\ast}$ is
obtained for a lower inflationary bias $\pi_{D}$ compensated by a higher $Q/R$
ratio. Assuming a zero weight ($Q=0$, $R>0$, $\pi_{D}\geq0$) on the variance
of the policy target with respect to the policy maker's biased long run
target, is equivalent to assume a non-zero weight and no inflationary bias
($Q\geq0$, $R>0$, $\pi_{D}=0$).

Instead of directly assuming that a conservative central banker has an
exogenously set lower inflationary bias $\pi_{D,C}$ in the loss function
$0<\pi_{D,C}<\pi_{D}$ which would be too simple for a publication in a top
journal, Rogoff's (1985) contribution is to assume that a conservative central
banker has an exogenously lower relative weight on the policy target $Q/R$ for
an unchanged exogenous inflationary bias $\pi_{D}$\ in the loss function.

Barro and Gordon (1983a) mention that if ever the policy maker with a
distorted steady state (exogenous inflationary bias) would choose
\textquotedblleft rules\textquotedblright\ ($\pi^{\ast}=0$), he would have an
incentive to deviate creating surprise inflation. They mention that this
solution does not hold in a rational expectations equilibrium ($\pi_{D}^{\ast
}$, $i_{D}^{\ast}$).

Barro and Gordon (1983a) justify their exogenous steady state inflationary
bias assumption by this single sentence:

\begin{quotation}
In the presence of unemployment compensation, income taxation, and the like,
the natural unemployment rate will tend to exceed the efficient level - that
is, privately chosen quantities of marketable output and employment will tend
to be too low. (Barro and Gordon (1983a), p.~593)
\end{quotation}

Taylor, however, is not convinced by this argument:

\begin{quotation}
In other well-recognized time inconsistency situations, society seems to have
found ways to institute the optional (cooperative) policy... The superiority
of the zero inflation policy is obvious... It is therefore difficult to see
why the zero inflation policy would not be adopted. (Taylor (1983), p.125)
\end{quotation}

Blinder (1998) comments on Barro and Gordon's (1983a) positive interpretation
of their model explaining high inflation during 1973-1979:

\begin{quotation}
Barro and Gordon ignored the obvious practical explanations for the observed
upsurge in inflation -- the Vietnam War, the end of the Bretton-Woods system,
two OPEC shocks, and so on -- and sought instead a theoretical explanation for
what they believed to be a systematic inflationary bias in the behaviour of
central banks. (Blinder (1998), p.40).
\end{quotation}

A\ version of Barro and Gordon (1983a) was soon included in graduate textbooks
of macroeconomics in order to explain Kydland and Prescott's (1977) time
inconsistency. For now four decades, which included a global disinflationary
trend and the emergence of inflation targeting, graduate students around the
world have exams related to the unlikely statement that central bankers have a
pro-inflation bias in their loss function:

\begin{quotation}
Though it didn't hurt them in academia, Kydland and Prescott's timing was
exceptionally poor from a real world perspective. As mentioned earlier, the
late 1970s and early 1980s witnessed sharp and painful disinflations in the
US, the UK, and elsewhere. Neither Paul Volcker nor Margaret Thatcher (the
Bank of England was not independent then) succumbed to the temptation posed by
time inconsistency; they probably never even heard of it. It turned out that
Kydland, Prescott, and other academics were prescribing how to fight the last
war just as the next war was getting underway. (Blinder (2020), p.31).
\end{quotation}

\section{Taylor (1993) Translates the Rules \emph{versus} Discretion
Controversy}

The perception of rules \emph{versus} discretion changed fundamentally with
the introduction of the Taylor rule in the 1990s. Like Barnett and Serletis
(2017), Kendrick (2005), and Turnovsky (2011), Taylor confirms the gains of
the proponents of the policy ineffectiveness claim in the 1980s:

\begin{quotation}
His (Ben McCallum (1999)) lecture also describes how research work on monetary
policy rules waned considerably in the 1980s, except for the work of small
groups \textquotedblleft toiling in the vineyards\textquotedblright. I call it
\textquotedblleft dark ages\textquotedblright\ in another paper; it seemed
like everyone interested in the new rational expectations methods in the 1980s
was working on real business cycle models without a role for monetary policy.
(Leeson and Taylor (2012))
\end{quotation}

Taylor refers to mainstream macroeconomists belonging to top US\ academic
circles in the 1980s, for example, the ones invited to NBER\ conferences, as
well as their PhD\ students. By contrast, in the Fed and in other central
banks, policy advisers still used models involving policy maker's negative
feedback response. But they had less and less access to publishing their
results in top academic journals. In continental Europe, in particular in
France and Belgium, macroeconomists around Dr\`{e}ze, Benassy and Malinvaud
were involved in the heyday of disequilibrium macroeconomics in the 1980s.
They considered that real business cycles models were not relevant.

During the \textquotedblleft dark ages\textquotedblright of the 1980s, Taylor
followed the strategies of dissenters in scientific controversies with the
outcome described by Latour (1987, p.137): \textquotedblleft\emph{From a few
helpless occupying a few weak points, they end up controlling strongholds.}%
\textquotedblright\ His strategy targeted policy makers at the Fed, in order
to convince them that ``rules" can also be a specific ``feedback rules".

One may be surprised that Taylor (1993) dedicated the first section of his
paper to semantic issues. Semantic issues matter in a scientific controversy
in other sciences than mathematics (which always proceed with exact
mathematical definitions).

Friedman (1948) and Kydland and Prescott (1977) originally \emph{translated}
``stabilization policy ineffectiveness" and ``the fixed setting of the policy
instrument" \emph{into} ``rules". They translated ``negative feedback policy"
\emph{into} ``discretion". The word ``rules" is related to more virtues than
``discretion", which has a wide range of negative polysemic connotations such
as the abuse of power or arbitrary, opportunistic, careless, thoughtless or
non-predictable random behavior.

For the proponents of stabilization policy, these negative connotations are
unrelated to the expected virtues of stabilization policy aiming to improve
social welfare. For example, Buiter (1981) translated \textquotedblleft
negative feedback policy" into the oxymoron \textquotedblleft flexible rules":

\begin{quotation}
This paper analyses an old controversy in macroeconomic theory and policy:
'rules versus discretion' or, more accurately, fixed rules (rules without
feedback or open-loop rules) versus flexible rules (contingent rules,
conditional rules, rules with feedback or closed-loop rules). (Buiter (1981), p.~647)
\end{quotation}

In their textbook, Dornbush and Fischer translated ``negative feedback policy"
into ``activist monetary rule":

\begin{quotation}
The growth rate of money is high when unemployment is high and is low when
unemployment is low. That way, monetary policy is expansionary at times of
recession and contractionary in a boom. (Dornbush and Fischer (1984), p.~343)
\end{quotation}

The fixed setting of the policy instrument is translated into
\textquotedblleft\emph{inactive policy}" (Blanchard and Fischer (1989), p.582)
or \emph{passive policy}, which may carry the negative connotations of policy
makers who may be lazy and useless bureaucrats being paid for doing nothing.

The \emph{translation} of the definition of the word \textquotedblleft
rule\textquotedblright\ by \emph{its opposite} in the controversy is a
dissenters' tactical move:

\begin{quotation}
Tactic four: rendering the detour invisible... People can still \emph{see} the
difference between what they wanted and what they got, they can still feel
they have been cheated. A fourth move is then necessary that turns the detour
into a progressive drift [e.g. \textquotedblleft activist
policy\textquotedblright, \textquotedblleft systematic
policies\textquotedblright], so that the enrolled group still thinks that it
is going along a \emph{straight} line, without ever abandoning its own
interests... I should now be clear why I used the word \emph{translation}. In
addition to its linguistic meaning (relating versions in one language to
versions in another one), it has also a geometric meaning (moving from one
place to another). Translating interests is at once offering \emph{new
interpretations }of these interests and channeling people in other directions.
(Latour (1987), p.~116-117)
\end{quotation}

In the Harry Johnson lecture at the Money Macro and Finance Research Group
conference in Durham, September 1997, Taylor (1998) defines \textquotedblleft%
\emph{translational economics}\textquotedblright\ in a more narrow sense than
Latour (1987):

\begin{quotation}
The process of finding ways for the research in the biology laboratories to be
applied in medicine to improve people's health is called `translational
biology'. Analogously the term translational economics might usefully
designate the process of finding ways to make academic research in economics
applicable to improving the performance of an economy. (Taylor (1998), p.~7)
\end{quotation}

Taylor used his position in the Council of Economic Advisers from June 1989 to
August 1991 during the George H.W Bush administration:

\begin{quotation}
As a member of the CEA I felt that I had an opportunity to move the policy
`ball' at least a little bit in the direction of the policy rule `goal line',
an opportunity that would not exist outside the policy arena. One plan of
action was to use the public forum offered by the annual Economic Report of
the President [1990] to make the case for monetary policy rules. (Taylor
(1998), p.~8-9)
\end{quotation}

In a first step, Taylor translated Keynesian \textquotedblleft negative
feedback policy\textquotedblright\ into \textquotedblleft systematic
policy\textquotedblright:

\begin{quotation}
When we wrote about policy \textquotedblleft rules\textquotedblright\ in the
1990 \emph{Economic Report to the President}, for example, we said
\textquotedblleft systematic\textquotedblright\ policies instead of rules so
as not to confuse people who might think that a rule meant a fixed setting for
the policy instrument. (Leeson and Taylor (2012))
\end{quotation}

McCallum (2015) mentions:

\begin{quotation}
At the time I was a member of the Carnegie-Rochester Conference's advisory
board. One of our duties was to suggest fruitful topics for future
conferences. In that capacity, at the board's planning meeting for the
November 1992 conference, some months earlier, I had suggested that a paper
should be commissioned that would develop some method or criterion that would
permit an outside researcher to determine whether an actual central bank's
actions over some significant span of time should be regarded as resulting
from a policy rule, rather than being \textquotedblleft
discretionary\textquotedblright. This suggestion met with approval by Allan
Meltzer and the other participants, and then we quickly agreed that the best
person to write the paper would be John Taylor---who Allan subsequently
contacted and signed up for the conference.

Well, as it turns out, the paper that John delivered did not actually do this.
Instead it proposed and (very effectively) promoted a specific rule. So at the
time of the conference, I, evidently, must have been somewhat put off by the
change in focus. (This change might have been arranged with Allan; about that
I do not know.). (McCallum (2015), p.2).
\end{quotation}

Secondly, in Taylor (1993) first section \textquotedblleft Semantic
issues\textquotedblright\ of a paper first presented in November 1992 at a
meeting in Pittsburgh of the Carnegie-Rochester conference on public policy,
and then published in the resulting conference volume in the spring of 1993,
Taylor (1993) translated the \textquotedblleft negative feedback
policy\textquotedblright\ into its opposite in the Friedman (1948)
controversy: \textquotedblleft policy rules\textquotedblright.

\begin{quotation}
There is considerable agreement among economists that a policy rule need not
be interpreted narrowly as entailing fixed settings for the policy
instruments. Although the classic rules versus discretion debate was usually
carried on as if the only policy rule were the constant growth rate rule for
the money supply, feedback rules in which the money supply responds to changes
in unemployment or inflation are also policy rules... A policy rule is a
contingency plan that lasts forever unless there is an explicit cancellation
clause. (Taylor (1993), p.~198).
\end{quotation}

Taylor (1993) proposed the following new definitions of rules \emph{versus}
discretion. He fixed rounded values of the parameters of a feedback rule which
was roughly predicting the Federal funds rate in the last six years (1987-1992):

\begin{definition}
Taylor (1998, 2017): \textquotedblleft Rules\textquotedblright\ follow
systematically a Taylor (1993) negative-feedback \textquotedblleft
rule\textquotedblright\ where the real federal funds rates has a 2\% set point
and responds in proportion to inflation deviation from 2\% and to output gap
with given parameters equal to 1/2.%
\[
i_{t}=\pi_{t}+\frac{1}{2}\left(  \pi_{t}-2\right)  +\frac{1}{2}x_{t}%
+2=1.5\pi_{t}+0.5x_{t}+1.
\]

\end{definition}

\begin{definition}
Taylor (1998, 2017): \textquotedblleft Discretion\textquotedblright\ is any
other monetary policy which deviates from a Taylor (1993) negative-feedback
rule, measured by the discrepancies (or the residuals) of funds rate with
respect to Taylor (1993) rule predictions.
\end{definition}

\begin{quotation}
The discrepancies between the [Taylor rule] equation and reality could be a
measure of discretion, either for good or for bad. (Taylor (1998), p.~12)
\end{quotation}

Taylor's (1993) policy paper was using a rhetorical presentation in order to
convince the Fed's policy makers, not academics. A feedback rule with fixed
parameters can be easily understood. His rule had soon early users among a
sufficiently large set of influential practitioners (in this case, at the Fed)
and of private sector analysts of policy making (Koenig et al. (2012)).

Taylor's (1993) plots $24$\ quarterly observations of Fed funds rate from
$1987$ to $1992$ and its forecast according to his rule. No statistical tests
are presented. The rounded parameters of his rule seem arbitrary (the formal
constraint is that they involve only the numbers $1$\ and $2$). There is no
information on the transmission mechanism, which is a crucial information for
designing negative feedback policy rule. There is no mention of reverse
causality and nor of endogeneity issues related to this forecast. The paper
was published in a conference volume by invitation along with comments by
McCallum (1993), so that peer-review was limited.

We mention Lucas, Kydland and Prescott overstated their results in the rules
versus discretion controversy. Taylor (1993) may have overstated the response
of Fed's fund rate to inflation during 1987-1992 in his policy rule. Let's
check the robustness of Taylor's (1993) rule estimating this policy rule with
currently available and revised quarterly data from FRED database for
inflation and congressional budget office (CBO) output gap, still for
$24$\ observations from $1987$ to $1992$. The inflation rule parameter
estimate is below one ($0.75$) instead of $1.5$ in the Taylor rule, so that
the Taylor principle is not satisfied. The output gap rule parameter estimate
is $0.75$ instead of $0.5$. The share of unexplained variance related to
discretion is $30\%$.

Inflation has a very low persistence with a very low autocorrelation
coefficient: $\rho_{\pi}=0.4$. By contrast, Fed funds rate and CBO output gap
are highly persistent, close to unit root with $\rho_{i}=0.96$ and $\rho
_{x}=0.935$. A regression of Fed funds rate on its lagged value and current
output gap leads to an $R^{2}=962\%$: the share of unexplained variance of the
policy rule related to discretion is less than $4\%$. Including inflation in
the regression only increases the $R^{2}$ by $0.0013\%$, which may correspond
to overfitting. Inflation persistence is low because of private sector's
negative feedback mechanism and because of a divine coincidence where leaning
against output gap fluctuations stabilizes the component of inflation
correlated with output gap. A policy maker's instrument should respond more to
more persistent target variables (Ashley \textit{et al.} (2020)). Rounding the
figures of the estimates in order to introduce the number $2$\ for a key
parameter, one may have proposed an alternative definition of "rules":

\begin{definition}
Rules follows systematically a persistence-dependent negative feedback rule
where Fed funds rate responds to its lagged value and to current output gap,
with Fed's funds rate long run sensitivity to the output gap equal to $2$.
\[
i_{t}=0.8i_{t-1}+0.4x_{t}\text{ with }2=\frac{0.4}{1-0.8}%
\]

\end{definition}

\begin{definition}
\textquotedblleft Discretion\textquotedblright\ is any other monetary policy
which deviates from this persistence negative-feedback rule, measured by the
discrepancies of funds rate with respect to the forecast of
persistence-dependent rule. For the period 1987-1992, discretion corresponds
to the unexplained share of variance is around $5\%$ in the estimation of the
policy rule.
\end{definition}

The unexpected side-effect of the interest paid by policy makers to Taylor's
interest rate rule was fed back three years later into academia with citations
in academic papers starting in 1996. These papers soon led to the emergence of
the three equation new-Keynesian theoretical model where the Taylor rule is
the third equation (Kerr and King (1996), Clarida, Gali and Gertler (1999)).
Google Scholar citations of Taylor (1993) increased by at least $40$ citations
each year from 1997 to 2008. The success of the Taylor (1993) rule tilted the
balance to formerly \textquotedblleft discretion\textquotedblright\ in the
original \textquotedblleft rules \emph{versus} discretion\textquotedblright%
\ controversy, with the gain of the original negative feedback
\textquotedblleft discretion" being renamed \textquotedblleft rule".

\begin{quotation}
Essentially, when the usefulness of discretion was rediscovered, the notion of
rule got the credit... This peculiar \emph{re-definition} of `rule' not only
renders the terminology bogus, but also the substance of rules vs discretion
Mark I. (Bibow (2004), p.~559).
\end{quotation}

The Taylor rule marked the return of classic control into top level
macroeconomic research. However, advisors using models at the Fed never
stopped using feedback rules. Even though optimal rules were considered too
complex to be disclosed to the press, forty years later after the initial use
of optimal control with the Fed FMP model, Fed's optimal control approach in
the FRB/US model made the headlines of financial newspapers, following
Yellen's (2012) speech. Yellen was at the time Vice Chair of the board of
governors of the Federal Reserve System:

\begin{quotation}
First, the FRB/US model's projections of real activity, inflation, and
interest rates are adjusted to replicate the baseline forecast values. Second,
a search procedure is used to solve for the path of the federal funds rate
that minimizes the value of an assumed loss function, allowing for feedback of
changes in the federal funds rate from baseline to real activity and
inflation. For the purposes of the exercise, the loss function is equal to the
cumulative discounted sum from 2012:Q2 through 2025:Q4 of three factors--the
squared deviation of the unemployment rate from 5-1/2 percent, the squared
deviation of overall PCE inflation from 2 percent, and the squared quarterly
change in the federal funds rate. The third term is added to damp
quarter-to-quarter movements in interest rates. (Yellen (2012), p.~16).
\end{quotation}

\section{Conclusion}

The present paper studied the use of control theory in macroeconomic models
with a special emphasis on the debate of monetary policy and -- in particular
-- the question of \textquotedblleft rules \textit{versus}
discretion\textquotedblright. Whereas the 1960s were dominated by a belief in
the effectiveness of discretionary stabilization policy, the rhetorics changed
in the subsequent years. The Lucas (1976) critique and Kydland and Prescott's
(1977) time inconsistency convinced a sufficient number of macroeconomists
that \emph{the negative-feedback mechanism of classic and optimal control
cannot be used for stabilization policy}, whereas it should be used to model
the private sector. Using rhetorical over-statements in their final
conclusions, helped to convince their readers of the \textquotedblleft
stabilization policy ineffectiveness claim". These influential papers paved
the way to Kydland and Prescott's (1982) real business cycle theory, assuming
that stabilization policy had no effect on US postwar business cycles.

Furthermore, these influential papers succeeded in delaying the expected fast
transplant of the robust control, linear quadratic Gaussian estimations and
Ramsey optimal policy in the 1980s to nearly twenty years. Our analysis stops
in 1993 with the emergence of the Taylor (1993) rule, which contributed to the
revision and the decline of the stabilization policy ineffectiveness claim. At
the end of the nineties, negative-feedback rules in backward-looking models
were used at the Fed, and Ramsey optimal policy and robust optimal control
gained interest in monetary policy analysis.

Also in the mid-1980s, however, another controversy started, namely the debate
\textquotedblleft commitment \textit{versus} discretion\textquotedblright. As
a starting point the model of Oudiz and Sachs (1985) can be seen. They modeled
a policy maker's behavior who re-optimizes in each period, as if he is
replaced by another policy maker at the end of the period. As this policy
maker lives only one period, he sets a probability equal to zero to the
expectations of the policy target. Omitting expectations, he does a static
optimization of an otherwise dynamic process for the policy target. Phillips
(1954) already mentioned that static optimization, when wrongly applied to a
dynamic model, can lead to a positive feedback rule which implies exploding
paths of the policy target. Therefore, a set of additional mathematical
restrictions are added on the private sector's behavior in order to eliminate
these unstable paths (Chatelain and Ralf (2021)). Oudiz and Sachs (1985)
solution was initially labeled \textquotedblleft time-consistent" policy. At
the beginning of the 1990s, it became the new benchmark model describing
\textquotedblleft discretion" (Clarida, Gali, Gertler (1999)).

The shift from the original definitions of ``rules" versus ``discretion" to
these models of ``commitment" versus ``discretion" brought semantic issues.
Firstly, the mathematics of this new ``discretion" model implies a
state-contingent positive feedback ``rule". Conversely, Ramsey optimal policy
under commitment has a state-contingent negative feedback ``rule", which
corresponds to the original definition of ``discretion" by Friedman (1960).

When discussing discretionary policies, a macroeconomist involved into policy
making at the end of the seventies may misunderstand a macroeconomist having
done his PhD\ at the end of the nineties. The older macroeconomist will
describe Volcker's policy as discretionary policy, with a strong negative
feedback effect against inflation. The next generation macroeconomist will
describe Volcker's policy as commitment (able to sharply decrease the
expectations of inflation) instead of discretion.

In a second approach, Currie and Levine (1985) defined a \textquotedblleft
simple rule" dynamic path including a proportional feedback rule when the
policy target is a jump variable without an initial condition. Although the
policy maker does not optimize, this equilibrium solution can be a reduced
form of Oudiz and Sachs' (1985) \textquotedblleft discretion" model (Chatelain
and Ralf (2020b)). Their \textquotedblleft simple rule" solution also implies
positive feedback rule parameters and exploding paths of the policy target. As
in Oudiz and Sachs' (1985) solution, a set of additional mathematical
restrictions on the private sector's behavior are assumed in order to
eliminate these unstable paths (Chatelain and Ralf (2020a and 2020b)).

\textquotedblleft Time-consistent" and \textquotedblleft simple rule" dynamic
paths with positive-feedback rule parameters advocate the \emph{opposite of
control} which favors negative feedback rule parameters for stabilization
policy as obtained with Ramsey optimal policy under commitment. Other than in
the above discussed \textquotedblleft rules\textquotedblright\ paradigm with
no policy feedback in Friedman (1960), Kydland and Prescott (1977, 1982),
Barro and Gordon (1983a)) due to confidence in the stabilizing force of the
private sector, in the subsequent \textquotedblleft time-consistent" or
\textquotedblleft simple rule" paradigm, positive feedback rules were favored
with a set of mathematical restrictions on the private sector's behavior,
eliminating unstable paths, as in Clarida, Gali, Gertler (1999), among many
other papers. \emph{In both cases, macroeconomists lost control. }The latter
story, however, deserves to be treated in a separate paper.

\end{document}